\documentclass[aps,pra,twocolumn,superscriptaddress]{revtex4-2}
\usepackage{graphicx}
\usepackage{amssymb}
\usepackage{amsmath}
\usepackage{array}
\usepackage{color}
\usepackage{hyperref}
\usepackage{physics}
\hypersetup{colorlinks=true,linkcolor=blue,citecolor=cyan}
\usepackage{graphicx}
\usepackage{cancel} 
\usepackage{bm}
\usepackage[english]{babel}
\usepackage{amsmath}
\usepackage{hyperref} 
\usepackage{times}
\usepackage{amsfonts,amssymb}
\usepackage{epsfig}
\usepackage{mathrsfs}
\usepackage{color,soul}
\usepackage{bbold}
\usepackage{textcomp}
\usepackage{dsfont}
\usepackage{braket}
\usepackage{empheq}
\usepackage{xcolor}
\usepackage{amsthm}
\usepackage{eurosym}
\usepackage{siunitx}
\AtBeginDocument{\RenewCommandCopy\qty\SI} 
\usepackage{enumitem}
\usepackage{comment}
\usepackage{multirow} 
\usepackage{orcidlink}

\def\tr{\operatorname{Tr}}
\newcommand{\pbraket}[3]{\left\langle #1 \middle| #2 \middle| #3 \right\rangle}

\def\Hi{\mathcal{H}}
\def\base{\{\ket{i}\}_{i=0}^{d-1}}
\def\BH{\mathcal{B}(\mathcal{H})}

\def\I{\mathcal{I}}
\def\IO{\mathcal{IO}}
\def\MIO{\mathcal{MIO}}
\def\SIO{\mathcal{SIO}}
\def\DIO{\mathcal{DIO}}

\newcommand{\kraus}[1]{\sum_k K_k #1 K_k^\dag}

\newcommand{\id}{\mathds{1}}

\newcommand{\ee}{\operatorname{e}}             
\newcommand{\ii}{\mathrm{i}}             

\renewcommand{\braket}[2]{\left\langle #1 \middle| #2 \right\rangle}

\usepackage{amsthm}
\newtheorem{definition}{Definition}
\newtheorem{lemma}{Lemma}

\newtheorem{proposition}{Proposition}
\newtheorem{example}{Example}

\begin{document}

\title{Resource theory of coherence in continuous position basis from measurement-induced dephasing}

\author{Karol Sajnok\orcidlink{0009-0004-5899-8923}}
\affiliation{Center for Theoretical Physics, Polish Academy of Sciences, Aleja Lotników 32/46, 02-668 Warsaw, Poland}
\affiliation{Nordita, Stockholm University and KTH Royal Institute of Technology, Hannes Alfvéns väg 12, 106 91 Stockholm, Sweden}
\affiliation{Institute of Theoretical Physics, University of Warsaw, ul. Pasteura 5, 02-093 Warsaw, Poland}
\email{ksajnok@cft.edu.pl}

\author{Fabio Costa\orcidlink{0000-0002-6547-6005}}
\affiliation{Nordita, Stockholm University and KTH Royal Institute of Technology, Hannes Alfvéns väg 12, 106 91 Stockholm, Sweden}
\affiliation{School of Mathematics and Physics, The University of Queensland, St Lucia, QLD 4072, Australia}
\email{fabio.costa@su.se}

\date{\today}

\begin{abstract}
We develop a resource-theoretic framework for quantum coherence directly in continuous basis, with emphasis on the position representation. Since position eigenstates are non-normalizable generalized eigenstates, the standard finite-dimensional dephasing map cannot be transferred directly to normal states. We therefore introduce a physically motivated dephasing channel based on random momentum kicks, equivalently described as the unconditional back-action of a finite-resolution position measurement. This yields a fixed-point notion of incoherence and a natural class of dephasing-covariant free operations. For physically relevant kernels, however, the fixed-point set contains no normal states, showing that continuous-basis coherence is tied to dephasing disturbance rather than to distance from a nonempty set of diagonal states. We study two quantifiers built from the channel action: a relative-entropy dephasing loss and a Hilbert--Schmidt dephasing loss. The former satisfies the main resource-theoretic properties under the free operations considered, while the latter is convex and experimentally transparent but fails monotonicity and strong monotonicity. We also formulate threshold witnesses for certifying coherence above a finite value and connect them, in a two-path setting, with interference visibility. Finally, we illustrate the framework with a Gaussian wavepacket evolving in a gravitational potential. The resulting theory provides a mathematically consistent and physically motivated treatment of coherence in continuous-variable systems.
\end{abstract}

\maketitle

\section{Introduction}
Quantum coherence is integral to quantum mechanics, underpinning phenomena such as interference and entanglement, and playing a central role in quantum technologies. In finite dimensions it is by now rigorously captured by resource theories that specify free states, free operations, and monotones with clear operational meanings. Yet many systems of practical and foundational interest---optical fields, motional degrees of freedom of atoms and nanoparticles, and gravitationally influenced matter waves---are intrinsically continuous-variable systems, where a direct transplant of the discrete framework runs into mathematical and physical pitfalls.

A basic issue already appears at the level of the reference basis itself. In the continuum, the natural candidates for ``incoherent'' basis vectors, such as position eigenstates $\ket{x}$, are not normalizable Hilbert-space vectors but generalized eigenstates, i.e. distributions. As a result, several constructions that are straightforward in finite dimensions become subtle in the continuum: projectors onto basis vectors are replaced by projection-valued measures, sums by integrals, and formal dephasing maps in the position basis need not define physically acceptable quantum channels on normal states. Another difficulty is definitional: a naive formal ``dephasing'' in the position basis is not trace preserving on normal states, and a physically implementable dephasing at finite resolution fails to be idempotent. A third difficulty is operational: one needs a class of free operations that neither create nor exploit coherence outside a prescribed spatial resolution while remaining compatible with completely positive dynamics. Finally, practical quantification requires measures that respect the resource-theoretic axioms and yet admit computable expressions or bounds for physically relevant families of states.

Here we show that the obstruction posed by the continuum is not merely technical, but structural. Starting from a finite-resolution position measurement, we derive a CPTP dephasing channel implemented by random momentum kicks and use it as the resource-destroying operation for position coherence. Its fixed-point structure is radically different from the finite-dimensional case: for physically relevant kernels, no normal trace-class state is exactly incoherent. This turns continuous-basis coherence into an intrinsically disturbance-based resource, quantified not by distance to a nonempty set of diagonal states, but by how strongly a state is changed by the dephasing channel. We identify the corresponding dephasing-covariant free operations, prove that the relative-entropy dephasing loss is monotone, strongly monotone, convex, and additive under the appropriate assumptions, and show that the natural Hilbert--Schmidt dephasing loss, although operationally accessible, fails monotonicity and strong monotonicity. We further explain why standard linear coherence witnesses become vacuous in this setting and replace them with threshold witnesses. Finally, we illustrate, with simulations, how the framework captures spatial coherence relative to a finite monitoring scale for a Gaussian wavepacket evolving in a gravitational potential.

The paper is structured as follows. Section~II reviews the standard resource theory of coherence in finite-dimensional systems, summarizing key concepts such as incoherent states, incoherent operations, and coherence measures. This sets the stage for our generalization to continuous basis. Section~III details and addresses the challenges in extending coherence to continuous variables, introduces a physically motivated dephasing map based on momentum kicks, and discusses the resulting incoherent states and coherence measures. We also explore a finite-resolution approach and introduce an idempotent step-function projector for defining incoherent states at fixed resolution. Measures of quantum coherence in continuous basis are discussed in Section~IV, where we adapt standard coherence quantifiers to the continuous setting and analyze their properties. Section~V introduces threshold witnesses for continuous-basis coherence and an double-slit interference inspired evaluation. In Section~VI, we illustrate our framework with the example of Gaussian states in a Newtonian potential, highlighting practical applications. Finally, Section~VII concludes with a summary of our findings and possible directions for future research.

\section{Resource Theory of Coherence: Finite–Dimensional Review}

The resource theory of coherence in finite-dimensional systems was formalized in \cite{baumgratz2014quantifying, levi2014quantitative, chitambar2016comparison, chitambar2016critical, chitambar2017erratum, winter2016operational, yadin2016quantum} and has become a standard framework for quantifying non-classicality. In this section we summarize its key ingredients—incoherent states, selected sets of incoherent operations, coherence measures—and recall their operational interpretations.

\subsection{Incoherent States}

Let $\Hi$ be a $d$-dimensional Hilbert space with a fixed, orthonormal reference basis $\base$. The \emph{incoherent states} \cite{streltsov2017colloquium} are defined as those density matrices $\rho \in \I \subset \BH$ which are diagonal in this specific basis,
\begin{equation}
\rho = \sum_{i=0}^{d-1} r_i\ketbra{i}{i},
\end{equation}
with $r_i\geq 0$, $\sum_i r_i=1$, where $\BH$ is the set of all linear operators on $\Hi$ and $\I$ is the set of incoherent states.

Analogously, this can be expressed in terms of a particular quantum channel, namely the dephasing (or completely decohering) channel in the chosen basis:
\begin{equation} \label{eq:dephasing_operator}
  \Delta(\rho) \coloneqq \sum_{i=0}^{d-1} \ketbra{i}{i}\rho\ketbra{i}{i}.
\end{equation}
Operationally, $\Delta$ can be interpreted as the non-selective projective measurement in the incoherent basis. Then, the set of incoherent states is defined as:
\begin{equation}
  \I = \left\{\delta\in\BH : \delta = \Delta(\delta) \right\}.
\end{equation}
Equivalently, $\I$ is the image of $\BH$ under the map $\Delta$. By construction, incoherent states contain no superposition between different basis vectors and serve as the \emph{free} or non-resourceful states of the theory.

\subsection{Incoherent Operations}

There are many classes of incoherent operations, as their purpose can vary depending on the context. Here, we recall four widely studied classes: maximally incoherent operations ($\MIO$), incoherent operations ($\IO$), strictly incoherent operations ($\SIO$) and dephasing-covariant incoherent operations ($\DIO$). For a broader discussion and additional classes, we refer the reader to \cite{streltsov2017colloquium}.

The $\MIO$ \cite{aberg2006quantifying} are the least restrictive class, consisting of all completely positive and trace-preserving (CPTP) maps $\Lambda_\MIO:\BH \to \BH$ that map incoherent states to incoherent states:
\begin{equation}
  \Lambda_\MIO (\I) \subseteq \I.
\end{equation}

Next, the most widely studied class of free operations is the set of $\IO$ \cite{baumgratz2014quantifying}, consisting of all CPTP maps $\Lambda_\IO:\BH\to\BH$ admitting a Kraus decomposition,
\begin{equation}
  \Lambda_\IO(\rho) = \kraus{\rho}, \quad \sum_k K_k^\dag K_k = \id,
\end{equation}
such that each $K_k$ maps incoherent states to incoherent states:
\begin{equation}
  \forall_k \, \forall_{\delta\in\I} \,K_k\delta K_k^\dag \in \I.
\end{equation}
Under $\IO$, no coherence can be generated from incoherent inputs, and any valid \emph{coherence measure} $C$ must be \emph{monotonic}:
\begin{equation}
  C\left(\Lambda_\IO(\rho)\right) \le C(\rho).
\end{equation}

Thirdly, the set $\SIO$ consists of those maps $\Lambda_\SIO \in \IO$ such that each Kraus operator in a Kraus decomposition is also strictly incoherent \cite{winter2016operational, yadin2016quantum}. Along with this property, outcomes of measurement in the reference basis $\base$ are independent of the coherence of the input state:
\begin{equation}
  \pbraket{i}{\Lambda_\SIO(\rho)}{i} = \pbraket{i}{\Lambda_\SIO(\Delta(\rho))}{i}, 
\end{equation}
where $\Delta \in \SIO$ is defined in Eq.~\eqref{eq:dephasing_operator}.

Finally, the set $\DIO$ is defined as the set of CPTP maps $\Lambda_\DIO:\BH\to\BH$ that commute with the dephasing operator \cite{chitambar2016comparison}:
\begin{equation}
  \Lambda_\DIO \circ \Delta = \Delta \circ \Lambda_\DIO.
\end{equation}
The set $\DIO$ is a strict subset of $\MIO$, whereas it is neither a subset nor a superset of $\IO$ or $\SIO$.

\subsection{Resourcefulness of coherence}
With $\IO$ defined, we can introduce a \emph{maximally coherent state}:
\begin{equation}
  \ket{\psi_d} = \frac1{\sqrt{d}}\sum_{i=0}^{d-1}\ket{i}.
\end{equation}
More generally, any state of the form $\frac1{\sqrt{d}}\sum_{i=0}^{d-1}\ee^{\ii \phi_i}\ket{i}$ is maximally coherent up to diagonal unitary phases. This state serves as a fundamental resource in the resource theory of coherence, as arbitrary states can be prepared from it under appropriately chosen $\IO$ operations \cite{baumgratz2014quantifying}. To quantify the coherence of a given state, two widely used methods are outlined below.

The \emph{distillable coherence} $C_{\rm dist}(\rho)$ represents the maximum rate \cite{yuan2015intrinsic, winter2016operational} at which copies of the maximally coherent qubit $\ket{\psi_2}$ can be distilled from $\rho$ using $\IO$. Conversely, the \emph{coherence cost} $C_{\rm cost}(\rho)$ is the minimum rate \cite{yuan2015intrinsic, winter2016operational} at which $\rho$ can be prepared from copies of $\ket{\psi_2}$ under $\IO$. These two quantities are related by the inequality:
\begin{equation}
  C_{\rm dist}(\rho) \le C_{\rm cost}(\rho).
\end{equation}
The explicit forms of distillable coherence and coherence cost can be found in \cite{streltsov2017colloquium}, while their relations to other coherence measures are discussed in the following subsection.

\subsection{Coherence Measures}

In this subsection, we recall coherence measures tailored to finite-dimensional systems, focusing on their mathematical properties and operational significance. These measures quantify the degree of non-classicality in a given quantum state relative to a fixed reference basis.

A functional $C:\BH\to[0,\infty)$ is a \emph{coherence measure} if it satisfies the following conditions \cite{baumgratz2014quantifying}:
\begin{enumerate}[label=(\roman*)]
  \item \emph{Faithfulness:} \label{c1}
  \begin{equation} 
    C(\rho) = 0 \Leftrightarrow \rho\in\mathcal{I}.
  \end{equation}
  \item \emph{Monotonicity:} \label{c2}
  \begin{equation}
    \forall_{\Lambda_\IO} \, C(\Lambda_\IO(\rho))\le C(\rho)
  \end{equation}
  \item \emph{Strong Monotonicity:} \label{c3}
  \begin{equation}
    C(\rho)\ge \sum \limits_k p_k\,C(\rho_k),
  \end{equation}
  where $p_k=\tr(K_k\rho K_k^\dagger)$, $\rho_k=\frac{K_k\rho K_k^\dagger}{p_k}$, for Kraus operators $\{K_k\}_k$ satisfying $\forall_k \, \forall_{\delta\in\I} \, K_k\delta K_k^\dagger\in\I$ and $\sum_k K_k^\dagger K_k=\id$.
  \item \emph{Convexity:} \label{c4}
  \begin{equation}
    C\left(\sum \limits_n q_n\rho_n\right)\le\sum \limits_n q_nC(\rho_n),
  \end{equation}
  where $q_n\geq 0$ and $\sum_n q_n = 1$.
\end{enumerate}
Additionally, the authors of \cite{streltsov2017colloquium} introduced two conditions analogous to those for entanglement measures \cite{plenio2014introduction}:
\begin{enumerate}[label=(\roman*), start=5]
  \item \emph{Uniqueness of pure states:} \label{c5}
  \begin{equation}
    C(\ketbra{\psi}{\psi}) = S(\Delta(\ketbra{\psi}{\psi})),
  \end{equation}
  where $\ket{\psi}$ is any pure state and $S(\rho)=-\tr(\rho\ln\rho)$ is the von Neumann entropy.
  \item \emph{Additivity:} \label{c6}
  \begin{equation}
    C(\rho \otimes \sigma) = C(\rho) + C(\sigma).
  \end{equation}
\end{enumerate}

The six conditions outlined above provide a rigorous framework for evaluating coherence quantifiers in quantum resource theories. They ensure that a measure is mathematically well defined, operationally meaningful, and consistent with the axioms of a resource theory. Yet, not all measures satisfy all six conditions simultaneously, and their applicability often depends on the specific operational setting. For example, the relative entropy of coherence (Eq.~\eqref{eq:rel_entropy_coherence} below), is particularly robust: it obeys all six conditions, including strong monotonicity and additivity, making it the preferred choice for tasks involving incoherent operations $\IO$. By contrast, elementwise $l_p$ norms of coherence are computationally simple and widely used in practice, but fail to meet some conditions such as monotonicity or additivity. Schatten $p$-norm based quantifiers provide further flexibility but generally fall short of satisfying the strongest requirements, especially under $\IO$. In the following we discuss these three main classes of measures in detail, emphasizing both their operational meaning and their compliance with the conditions \ref{c1}–\ref{c6}.

Before examining the individual measures, it is useful to recall that any distance-based quantifier of the form
\begin{equation}
    C_D(\rho) = \inf_{\sigma\in\mathcal I} D(\rho,\sigma),
\end{equation}
where $D$ is a contractive distance under CPTP maps, automatically satisfies faithfulness \ref{c1}, monotonicity under $\IO$ \ref{c2}, and convexity \ref{c4}. The relative entropy of coherence $C_\mathrm{rel}$ and the trace norm coherence $C_1$ are prime examples. In contrast, the Hilbert–Schmidt norm $C_2$ is not contractive, and therefore the corresponding quantifier fails strong monotonicity \ref{c3} \cite{baumgratz2014quantifying}.

\subsubsection{Relative entropy of coherence}
The relative entropy of coherence is defined as 
\begin{align} \label{eq:rel_entropy_coherence}
    C_{\mathrm{rel}}(\rho) &= S(\rho \Vert \Delta(\rho))=\tr(\rho\ln\rho) - \tr(\rho\ln\Delta(\rho)).
\end{align}
Due to the algebraic properties of the dephasing operator $\Delta$, namely $\tr(\rho\ln\Delta(\sigma))=\tr(\rho \Delta(\ln \Delta(\sigma)))=\tr(\Delta(\rho)\ln\Delta(\sigma))$, Eq.\eqref{eq:rel_entropy_coherence} can be expressed as
\begin{equation}
    C_{\mathrm{rel}}(\rho) = S(\Delta(\rho)) - S(\rho),
\end{equation}
where $S(\rho)=-\tr(\rho\ln\rho)$ is the von Neumann entropy and $\Delta(\rho)$ is the dephased state defined in Eq.\eqref{eq:dephasing_operator}.

This measure has a particularly transparent operational role. Under $\IO$, the distillable coherence equals $C_{\mathrm{rel}}$,
\begin{equation}
    C_\mathrm{dist}(\rho) = C_{\mathrm{rel}}(\rho),
\end{equation}
giving it the exact meaning of the optimal rate for distilling maximally coherent states \cite{winter2016operational,yuan2015intrinsic}. The dual task, coherence cost, is given by the entropy of formation $C_\mathrm{cost}(\rho)=E_\mathrm{form}(\rho)$ \cite{bennett1996mixed}. Inequality $C_\mathrm{cost}(\rho)\ge C_\mathrm{dist}(\rho)$ shows that coherence theory under $\IO$ is generally irreversible \cite{winter2016operational}. Reversibility is recovered under $\MIO$, where
\begin{equation}
    C_\mathrm{dist}(\rho)=C_\mathrm{cost}(\rho)=C_{\mathrm{rel}}(\rho),
\end{equation}
and no analogue of “bound coherence” exists \cite{brandao2015reversible}.

Importantly, $C_{\mathrm{rel}}$ satisfies all six criteria \ref{c1}-\ref{c6}, making it the most robust and widely accepted coherence monotone \cite{baumgratz2014quantifying,streltsov2017colloquium}. It also admits alternative interpretations: as the minimal noise needed to fully decohere a state \cite{singh2015maximally}, or as a measure of deviation from thermal equilibrium \cite{rodriguez2013thermodynamics}.

\subsubsection{Elementwise norms of coherence}

Another class of measures is based on elementwise $l_p$ norms of off-diagonal elements:
\begin{equation}
    C_{l_p}(\rho)=\Big(\sum_{i\ne j}|\rho_{ij}|^p\Big)^\frac{1}{p},
\end{equation}
with $C_{l_1}$ the canonical case. $C_{l_1}$ is faithful, convex, monotone under $\IO$, and has a simple closed form, making it the most widely used computable measure \cite{baumgratz2014quantifying}. However, it fails \ref{c5}, giving $C_{l_1}(\ket{\Psi_d})=d-1$ for maximally coherent states instead of $\log d$. Additionally, it also fails \ref{c6}, as $d_1d_2-1>(d_1-1)+(d_2-1)$, where $d_{1,2}$ are the dimensions of two Hilbert spaces. Finally, it also loses monotonicity \ref{c2} under $\MIO$ in dimensions $d>2$ \cite{bu2016note,streltsov2017colloquium}. Despite lacking an exact asymptotic operational meaning, its simplicity and link to interferometric visibility keep it valuable in practice.

For $p>1$, $C_{l_p}$ retains faithfulness \ref{c1} and convexity \ref{c4} but generally violates monotonicity \ref{c2} and strong monotonicity \ref{c3}, especially in higher dimensions \cite{rana2016trace}. These measures are therefore better viewed as coherence witnesses rather than full monotones.

\subsubsection{Schatten norm quantifiers}

A third family arises from Schatten $p$-norms,
\begin{equation}
    C_p(\rho)=\inf_{\sigma\in\mathcal I}|\rho-\sigma|_p,
\end{equation}
where $\|M\|_p = \left[\tr \left((M^\dagger M)^\frac{p}{2} \right) \right]^\frac{1}{p}$ is the Schatten $p$-norm \cite{schatten1970norm}.

The trace-norm case ($p=1$) is contractive and thus satisfies \ref{c1}, \ref{c2}, and \ref{c4}, but it fails strong monotonicity \ref{c3} under $\IO$ \cite{yu2016alternative}. For qubits and for two-qubit $X$ states, i.e.\ states whose density matrix has nonzero entries only on the main diagonal and anti-diagonal, $C_1$ coincides with $C_{l_1}$, restoring full monotonicity \cite{shao2015fidelity,rana2016trace}. For general states, identifying the closest incoherent state remains nontrivial \cite{chen2016quantifying}.

For $p=2$ (Hilbert–Schmidt) and higher, these measures preserve faithfulness and convexity but again violate monotonicity in most cases \cite{rana2016trace}. As such, they are not regarded as reliable coherence monotones, though they occasionally appear in applications due to computational convenience.

\subsubsection{Comparison of Properties}
The main properties of the measures discussed above are summarized in Table~\ref{tab:coherence-measures}. It is evident that the relative entropy of coherence stands out as the only quantifier satisfying all six conditions, while the others trade off formal rigor for computability and practical applicability.
\begin{center}
\begin{table}[h]
\begin{tabular}{lcccccc}
\hline\hline
Measure & \ref{c1} & \ref{c2} & \ref{c3} & \ref{c4} & \ref{c5} & \ref{c6} \\
\hline
$C_{\mathrm{rel}}$ & \checkmark & \checkmark & \checkmark & \checkmark & \checkmark & \checkmark \\
$C_{l_1}$ & \checkmark & \checkmark & \checkmark & \checkmark & \texttimes & \texttimes \\
$C_{l_p}$ ($p>1$) & \checkmark & \texttimes & \texttimes & \checkmark & \texttimes & \texttimes \\
$C_1$ & \checkmark & \checkmark & \texttimes & \checkmark & \texttimes & \texttimes \\
$C_p$ ($p>1$) & \checkmark & \texttimes & \texttimes & \checkmark & \texttimes & \texttimes \\
\hline\hline
\end{tabular}
\caption{Summary of coherence measures and their compliance with conditions \ref{c1}-\ref{c6} under $\IO$.}
\label{tab:coherence-measures}
\end{table}
\end{center}

\subsection{Coherence Witnesses and Experimental Realization}
In full analogy with entanglement, a coherence witness is a Hermitian operator $W$ such that
\begin{align}
    \forall_{\sigma \in \I}\; \tr(W\sigma) &\ge 0, \\
    \exists_{\rho \notin \I}\; \tr(W\rho) &< 0.
\end{align}
Witnesses arise naturally from the convex duality of distance- and robustness-based measures; in particular, the robustness of coherence admits a semidefinite-program formulation whose dual yields experimentally accessible witnesses \cite{napoli2016robustness,piani2016robustness}. Practical realizations include: (i) interferometric schemes that read out specific off-diagonal elements or visibilities, thereby implementing linear witnesses; (ii) phase-discrimination circuits that operationally certify nonclassical advantages linked to robustness; and (iii) quantum-Fisher-information-based lower bounds that require only a small set of measurements, providing witness-like certification of so-called ``unspeakable'' coherence, i.e.\ coherence relative to a physically distinguished observable or symmetry generator (such as energy, phase, or angular momentum), rather than relative to an arbitrary reference basis \cite{girolami2014observable}. These strategies have been demonstrated on photonic, NMR, and solid-state platforms, and integrate seamlessly with standard tomography when full state reconstruction is feasible \cite{streltsov2017colloquium}.

\subsection{Approaches to Coherence in Continuous Variables}

Most studies of coherence in continuous-variable systems have been developed in restricted frameworks rather than directly in a genuinely continuous basis such as position or momentum. This is not merely a technical matter: different approaches are motivated by different physical notions of classicality, and therefore identify different sets of free states and different kinds of superposition as relevant resources.

A first line of work takes the Fock basis $\{\ket{n}\}$ as the incoherent reference basis, leading to a countable, though infinite-dimensional, extension of the standard resource-theoretic framework. The main motivation here is clear: in many optical and bosonic settings, the occupation-number basis is singled out by the structure of the mode and by photon-counting measurements, so coherence is identified with superposition between different number sectors. Within this setting, standard measures such as the relative entropy of coherence remain well defined under finite-energy constraints, while the $l_1$ norm typically diverges and is therefore unsuitable \cite{zhang2016quantifying}. Thus, this framework preserves as much as possible of the finite-dimensional theory. However, it is clearly unsuitable to characterise non-classicality for continuous degrees of freedom, such as position.

A second approach focuses on Gaussian states, which are fully determined by first and second moments. The motivation here is mainly operational and experimental: Gaussian states are the natural workhorses of continuous-variable quantum information, because they are readily prepared in the laboratory, are stable under the most common optical operations, and admit a simple phase-space description in terms of covariance matrices and displacements \cite{adesso2014continuous}. One can then define Gaussian versions of relative-entropy or distance-based coherence quantifiers, but these usually provide only upper bounds to the corresponding unrestricted measures, since the closest incoherent state need not be Gaussian \cite{xu2016quantifying,buono2016quantum}. In this sense, the Gaussian framework is physically important and computationally tractable, but it is tailored to a privileged family of states rather than to coherence in a genuinely continuous basis.

A conceptually different line of work is the theory of optical coherence developed by Glauber and Sudarshan \cite{glauber1963quantum,sudarshan1963equivalence}. Here the underlying motivation is not basis-dependent superposition, but the distinction between classical and nonclassical radiation fields. In the Glauber--Sudarshan framework, coherent states $\ket{\alpha}$ are regarded as the basic classical pure states of the electromagnetic field, because they provide the closest quantum analogue of classical radiation: they reproduce classical field amplitudes at the level of expectation values, evolve within the family of coherent states under free dynamics, and underlie the semiclassical description of optical fields \cite{glauber1963quantum,sudarshan1963equivalence}. Correspondingly, a state is deemed classical when it can be written as a statistical mixture of coherent states with a nonnegative Glauber--Sudarshan $P$ representation, whereas nonclassicality appears when such a description requires a singular or nonpositive quasiprobability distribution \cite{glauber1963quantum,sudarshan1963equivalence,streltsov2017colloquium}. This notion of classicality is conceptually distinct from ``incoherence'' in the standard resource-theoretic sense: for instance, the coherent state
\begin{align}
    \ket{\alpha}=e^{-|\alpha|^2/2}\sum_{n=0}^\infty \frac{\alpha^n}{\sqrt{n!}}\ket n
\end{align}
is highly coherent in the Fock basis, even though it is classical in the Glauber--Sudarshan sense. The contrast between these approaches therefore reflects different physical motivations: one quantifies superposition relative to a chosen orthonormal basis, while the other quantifies departure from a classical-wave description of the field.

These examples make clear why coherence in the position or momentum basis requires separate treatment. Fock-space coherence is tied to number superselection and photon counting, Gaussian coherence to the privileged role of Gaussian states in continuous-variable protocols, and optical coherence to the distinction between classical and nonclassical radiation. None of these directly captures the question studied here: whether a state exhibits superposition between different positions or momenta, and how this superposition is degraded by coarse measurements or environmental noise. This is the natural setting, for instance, in finite-resolution position measurements, matter-wave interferometry, time-of-flight imaging, and scattering-induced decoherence, where the relevant loss of quantumness is the suppression of off-diagonal terms in the position or momentum representation.

\vspace{1em}
This review establishes the baseline definitions, measures, and operational insights that will guide our generalization to continuous‐basis systems in the next section.

\section{Generalizing Coherence to Continuous Variables}

The transition from a discrete, finite-dimensional resource theory of coherence to one based on observables with a continuous spectrum (such as position) is fraught with new technical and conceptual subtleties. We first catalogue the main obstacles, then develop a physically motivated ``momentum-kick'' dephasing map and its consequences (incoherent states, witnesses, and an illustrative Gaussian example). We also comment on a complementary, finite-resolution (Kraus) picture and, finally, introduce an idempotent step-function projector that cleanly defines incoherent states at fixed resolution in the continuous position basis.

Technical challenges arise from the infinite-dimensional nature of the Hilbert space and the continuous spectrum of observables. The main issues include:
\begin{enumerate}
  \item \textbf{Non-physical eigenstates and divergences.}
In a continuous Hilbert space $\mathcal H=L^2(\mathbb R)$ (the space of square-integrable wavefunctions on $\mathbb R$), the formal ``basis'' $\{\ket{x}\}_{x\in\mathbb R}$ of position eigenstates does not consist of valid Hilbert-space vectors and the corresponding ``projectors'' are not operators in Hilbert space. In particular, they are not trace-class, so that a naïve extension of the discrete dephasing map,
\begin{align}
  \widetilde\Delta(\rho)=\int_{-\infty}^{\infty}\dd{x}\ketbra{x}{x}\rho\ketbra{x}{x},
\end{align}
is ill-defined: using $\mel{x}{\rho}{y}=\rho(x,y)$ one obtains formally
\begin{align}
  \tr[\widetilde\Delta(\rho)] &=\int_{-\infty}^{\infty}\dd{x}\mel{x}{\rho}{x}\braket{x}{x} \nonumber\\
  &=\int_{-\infty}^{\infty}\dd{x}\rho(x,x)\,\delta(0),
\end{align}
so $\widetilde\Delta(\rho)$ fails to be a valid density operator.

\item \textbf{Breakdown of idempotency.}
In finite dimensions $\Delta^2=\Delta$. Any smooth, finite-resolution continuous-basis map $\mathcal D_\epsilon$ that approaches ideal dephasing as $\epsilon\to 0$ necessarily acts by suppressing (rather than deleting) off-diagonal matrix elements at finite $\epsilon$, and therefore typically satisfies $\mathcal D_\epsilon^2\neq \mathcal D_\epsilon$. The loss of projector structure complicates closure properties and standard monotonicity arguments.

\item \textbf{Trace-class and domain subtleties.}
Even when avoiding $\widetilde\Delta$, sharply structured kernels $\rho(x,y)$ can fall outside trace class or outside the domains of unbounded generators. Coherence functionals built from kernels therefore require a bona fide state-to-state map and explicit regularity assumptions.
\end{enumerate}

\subsection{Measurement-Induced Dephasing: Random Momentum Kicks as a CPTP Channel}

A natural way to motivate a position-basis dephasing channel is via an indirect measurement model \cite{busch1996quantum,busch1996standard,wiseman2009quantum,jacobs2006straightforward}. Consider a system $S$ and a meter $M$, both both continuous-variable, with canonical pairs $(\hat X_S,\hat P_S)$ and $(\hat X_M,\hat P_M)$. Let the meter be prepared in a wave packet $\ket{m}$ with position wavefunction $m(z)=\braket{z}{m}$, and the system in $\ket{\psi}$ with $\psi(x)=\braket{x}{\psi}$. Take the von Neumann measurement interaction
\begin{align} \label{eq:positionmeasurementH}
  \hat H=\lambda\,\hat X_S\otimes \hat P_M,
\end{align}
and evolve for time $t$:
\begin{align}
  \hat U\ket{\psi}\ket{m}
  &=\ee^{-\frac{\ii}{\hbar} \hat H t}\ket{\psi}\ket{m} \nonumber \\
  & =\int_{-\infty}^{\infty}\dd{x}\psi(x)\ket{x}\otimes \ee^{-\frac{\ii}{\hbar} \lambda t x \hat P_M}\ket{m}.
\end{align}
Since $\ee^{-\frac{\ii}{\hbar} a \hat P_M}$ is the translation operator on the meter position,
\begin{align}
  \bra{z}\ee^{-\frac{\ii}{\hbar} a \hat P_M}\ket{m}=m(z-a),
\end{align}
a measurement of the meter in the $\{\ket{z}\}$ basis yields the (unnormalized) conditional post-measurement system state
\begin{align}
  \left(\id\otimes\bra{z}\right)\hat U\ket{\psi}\ket{m} =\int_{-\infty}^{\infty}\dd{x}\psi(x)\,m(z-\lambda t x)\ket{x}.
\end{align}
The corresponding outcome density is
\begin{align}
  P(z)=\int_{-\infty}^{\infty}\dd{x}|\psi(x)|^2|m(z-\lambda t x)|^2,
\end{align}
which approaches an ideal position readout in the limit of a sharply peaked pointer state $m$.

For density operators, the (unnormalized) conditional update in the position representation takes the form
\begin{align}
  \rho(x,y) \mapsto \rho(x,y)\,m(z-\lambda t x)\,m^*(z-\lambda t y).
\end{align}
Averaging over outcomes $z$ (i.e., the unconditional evolution corresponding to discarding the meter readout) produces a position-dephasing channel that multiplies $\rho(x,y)$ by a function of $x-y$ set by the meter resolution. Explicitly, the unconditional map reads
\begin{align}
  \mathcal D_m(\rho) &\coloneqq \int_{-\infty}^{\infty}\dd{z} \hat M_z \rho \hat M_z^\dagger, \label{eq:meter_unconditional_line1} \\
  \hat M_z &\coloneqq \int_{-\infty}^{\infty}\dd{x} m(z-\lambda t x)\ketbra{x}{x}, \label{eq:meter_unconditional_line2}
\end{align}
so that, in the position representation,
\begin{align}\label{eq:meter_kernel}
  \big(\mathcal D_m(\rho)\big)(x,y) &= G\big(\lambda t(x-y)\big)\rho(x,y),
\end{align}
with the resolution-dependent suppression factor
\begin{align}\label{eq:G_def}
  G(z)\coloneqq \int_{-\infty}^{\infty}\dd{u} m(u) \, m^*(u+z).
\end{align}
By construction, $G(0)=\int \dd{z}\,|m(z)|^2=1$ and $|G(z)|\le 1$. More precisely, if the meter state $\ket{m}$ is normalizable, then $|G(z)|<1$ for every $z\neq 0$. Indeed, $G(z)=\bra{m}\hat T_z\ket{m}$, where $\hat T_z=\ee^{-\frac{\ii}{\hbar} z\hat P_M}$ is the translation operator on the meter. Since $\hat T_z$ is unitary, equality $|G(z)|=1$ would require $\ket{m}$ to be an eigenstate of $\hat T_z$. For $z\neq 0$, however, the corresponding generalized eigenstates are non-normalizable momentum eigenstates, so no normalizable wave packet can saturate the bound. We will use this strict inequality below when discussing the fixed-point structure of $\Delta_g$.

This construction makes explicit the operational meaning of ``dephasing'' in a continuous basis: it is the \emph{unconditional} state update associated with a measurement in (or close to) the incoherent basis, i.e., the channel obtained after averaging the conditional post-measurement states over all outcomes (equivalently, discarding the classical record). In the measurement-operator language, $\mathcal D_m(\rho)=\int \dd{y}\hat M_y \rho \hat M_y^\dagger$ is precisely the Kraus representation of a generalized position measurement with finite resolution determined by the pointer wavepacket $m$ \cite{busch1996quantum,busch1996standard,preskill2015lecture}. The resulting kernel action, Eq.~\eqref{eq:meter_kernel}, shows that finite-resolution position monitoring preserves the diagonal $\rho(x,x)$ while suppressing coherences $\rho(x,y)$ as a function of $x-y$, thereby implementing the physically intended notion of position-basis dephasing.

The dephasing from unconditional position measurements can also be written in terms of ``random momentum kicks''. Expanding the meter state in momentum space, the interaction generated by Eq.~\eqref{eq:positionmeasurementH} acts on the system, for each meter momentum component $p$, as the unitary $\ee^{-\frac{\ii}{\hbar}\lambda t\,p\,\hat X}$, i.e., as a momentum kick generated by $\hat X$. After tracing out the meter, one therefore obtains a convex mixture of such unitaries, with weights given by the meter momentum distribution \cite{wiseman2009quantum,jacobs2006straightforward}. Equivalently, the random-kick form provides the Fourier-dual representation of the same channel $\mathcal D_m$ whenever $\tilde g$ arises from the momentum-space distribution of the same pointer state that defines $G$ in Eq.~\eqref{eq:G_def}. More general choices of $\tilde g$ still define CPTP random-kick channels, but need not correspond to such a measurement model.

Accordingly, we model coarse position monitoring by imparting a random momentum kick $p$ with probability density $\tilde g(p)\ge 0$, $\int \dd{p}\,\tilde g(p)=1$, and (for unbiased kicks) $\int \dd{p}\,p\,\tilde g(p)=0$. Since $\hat X$ generates momentum translations, $\ee^{\frac{\ii}{\hbar}p \hat X}\ket{p_1}=\ket{p_1+p}$, we define the dephasing channel
\begin{align}\label{randomkicks}
  \Delta_g(\rho)=\int_{-\infty}^{\infty}\dd{p}\,\tilde g(p)\, \ee^{\frac{\ii}{\hbar}p \hat X}\,\rho\,\ee^{-\frac{\ii}{\hbar}p \hat X},
\end{align}
which is a convex mixture of unitaries and hence completely positive and trace preserving (CPTP). In fact, $\Delta_g$ is the same channel as $\mathcal D_m$, written in a representation adapted to translations generated by $\hat X$.

In the position representation, with $\rho(x,y)\coloneqq \mel{x}{\rho}{y}$,
\begin{align}
  \Delta_g(\rho) &=\iiint_{\mathbb{R}^3} \dd{p}\dd{x}\dd{y}\rho(x,y)\tilde g(p) \ee^{\frac{\ii}{\hbar}p \hat X} \ketbra{x}{y} \ee^{-\frac{\ii}{\hbar}p \hat X}\nonumber\\
  &=\iint_{\mathbb{R}^2} \dd{x}\dd{y} \left(\int_{\mathbb{R}} \dd{p}\ee^{\frac{\ii}{\hbar}p(x-y)}\tilde g(p)\right)\rho(x,y)\ketbra{x}{y}\nonumber\\
  &=\iint_{\mathbb{R}^2} \dd{x}\dd{y}g(x-y)\rho(x,y)\ketbra{x}{y},
\end{align}
where
\begin{align} \label{eq:FTg}
  g(\xi)\coloneqq \int_{-\infty}^{\infty}\dd{p}\ee^{\frac{\ii}{\hbar}p\xi}\tilde g(p)
\end{align}
is the characteristic function of $\tilde g$. Thus,
\begin{align}\label{positionaction}
  \Delta_g:\ \rho(x,y)\mapsto g(x-y)\rho(x,y).
\end{align}
Since $g(0)=\int \dd{p}\tilde g(p)=1$, the diagonal $\rho(x,x)$ is preserved, while off-diagonal elements are suppressed on a scale set by the width of $g$. Identifying $\Delta_g$ with the unconditional measurement channel $\mathcal D_m$, one has
\[
g(\xi)=G(\lambda t\,\xi),
\]
so the kernel in Eq.~\eqref{positionaction} is exactly the same suppression factor as in Eq.~\eqref{eq:meter_kernel}, written in the random-kick parametrization. This identification holds for the subclass of random-kick channels generated by the indirect-measurement construction above. In general, one may also consider CPTP channels of the form \eqref{randomkicks} with more singular probability measures $\tilde g$, for which the kernel $g$ need not arise from a normalizable pointer state.

From an operational point of view, we are mainly interested in dephasing maps that admit a measurement-model interpretation. In this setting, the function $g(\xi)$ quantifies the residual coherence between position components separated by $\xi$. A nontrivial measurement should suppress such coherences, so we typically have $|g(\xi)|<1$ for $\xi\neq 0$, while $g(0)=1$ guarantees that the diagonal part of the state is preserved. It is often physically natural, though not always necessary, to impose further conditions on the measurement model. For example, the absence of a net momentum kick corresponds to unbiased measurement noise, which entails appropriate symmetry properties of $g$, such as $g(\xi)=g(-\xi)^*$ and, in the common real-valued case, $g(\xi)=g(-\xi)$. These additional assumptions are not needed for most of the analysis and will be introduced explicitly only at the points where they play a role.

Unless $g$ takes only the values $0$ or $1$ a.e. (such as a top-hat or a Heaviside step function), $\Delta_g$ is not idempotent:
\begin{align}
  \Delta_g^2 : \rho(x,y)\mapsto g(x-y)^2\rho(x,y)\neq \Delta_g(\rho),
\end{align}
where $\Delta_g^2 \coloneqq \Delta_g\circ \Delta_g$ denotes the self-composition of $\Delta_g$, and $\circ$ indicates composition of maps. Nevertheless, $\Delta_g$ always has the identity as a fixed point,
\begin{align}\label{fixedidentity}
  \Delta_g(\id)=\id,
\end{align}
which follows immediately from Eq.~\eqref{positionaction} and $\mel{x}{\id}{y}=\delta(x-y)$, since $g(0)\delta(x-y)=\delta(x-y)$.


\subsection{Fixed-Point Incoherence in the Continuous Position Basis}\label{subsec:fixedpoint}

The measurement model above motivates $\Delta_g$ as the \emph{unconditional} (non-selective) evolution induced by position monitoring at finite resolution \cite{busch1996quantum, busch1996standard, jacobs2006straightforward, wiseman2009quantum}. In finite-dimensional coherence theory, incoherent states are fixed points of the dephasing projector. In the continuous-variable setting, the physically meaningful CPTP maps are not projectors in general, so one must specify what plays the role of the ``free'' set. In this work we adopt the fixed-point notion, which is both conceptually direct and mathematically sharp.

\begin{definition}[Fixed-point incoherent states]\label{def:fixedpoint_incoherent}
Fix a CPTP position-dephasing channel $\Delta_g$ as in Eqs.~\eqref{randomkicks}--\eqref{positionaction}. The set of \emph{$g$-incoherent} states is defined as the fixed-point set
\begin{align}\label{eq:Ifix_def}
  \mathcal I_{\mathrm{fix}}^g\coloneqq \big\{\rho \; \mathrm{state}:\ \Delta_g(\rho)=\rho\big\}.
\end{align}
\end{definition}

Using Eq.~\eqref{positionaction}, the fixed-point condition is equivalently
\begin{align}\label{eq:fixedpoint_kernel_condition}
  \bigl(1-g(x-y)\bigr)\rho(x,y)=0\qquad \text{for almost all }(x,y).
\end{align}
In particular, if $|g(\xi)|<1$ for all $\xi\neq 0$ (as for Gaussian resolution functions), then any fixed point must satisfy $\rho(x,y)=0$ for $x\neq y$. This reproduces the discrete-basis intuition that incoherent states are ``diagonal'' in the incoherent basis. The crucial difference is that, in a continuous basis, perfect diagonality is incompatible with normalizable density operators. We stress that the strict inequality $|g(\xi)|<1$ for $\xi\neq 0$ is an additional assumption on the kernel, not part of the definition itself. It is satisfied for the class of channels induced by the normalizable pointer-state measurement model discussed above, but need not hold for more general CPTP random-kick channels with singular probability measures $\tilde g$.

\begin{proposition}[Absence of normal fixed points]\label{prop:no_normal_fixed_points}
Assume $|g(\xi)|<1$ for all $\xi\neq 0$. Then $\mathcal I_{\mathrm{fix}}^g$ contains no normal (trace-class) density operators on $L^2(\mathbb R)$.
\end{proposition}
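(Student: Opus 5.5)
The argument is by contradiction. Suppose $\rho$ is a normal density operator with $\Delta_g(\rho)=\rho$. Because $\rho$ is trace class, in particular Hilbert--Schmidt, it has a genuine integral kernel $\rho(x,y)\in L^2(\mathbb R^2)$. One way to write it is through the spectral decomposition $\rho=\sum_k \lambda_k\ketbra{\psi_k}{\psi_k}$, which gives $\rho(x,y)=\sum_k\lambda_k\psi_k(x)\psi_k^*(y)$, converging in $L^2(\mathbb R^2)$. With this kernel in hand, the fixed-point condition can be read off exactly as in Eq.~\eqref{eq:fixedpoint_kernel_condition}, namely $(1-g(x-y))\rho(x,y)=0$ for almost all $(x,y)$. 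Here we use that $\Delta_g$ acts on the kernel by pointwise multiplication with the bounded function $g(x-y)$, which is justified by Eq.~\eqref{positionaction}.

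Next we use the hypothesis $|g(\xi)|<1$ for $\xi\neq0$. It gives $1-g(x-y)\neq 0$ whenever $x\neq y$. The diagonal $\{x=y\}$ has Lebesgue measure zero in $\mathbb R^2$, so $\rho(x,y)=0$ almost everywhere on $\mathbb R^2$. An $L^2$ kernel that vanishes almost everywhere defines the zero operator, so $\rho=0$. This contradicts $\tr\rho=1$. Hence $\mathcal I^g_{\mathrm{fix}}$ contains no normal states.

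The main point requiring care is the trace. One might worry that it is computed from the diagonal $\rho(x,x)$, which is invisible to an ``almost everywhere'' statement, so that a state might somehow carry its trace on the diagonal. I avoid this by working entirely at the Hilbert--Schmidt level. The condition $\|\rho\|_2^2=\iint|\rho(x,y)|^2\,dx\,dy=0$ forces $\rho=0$ as an operator, and only afterwards do we conclude $\tr\rho=0$. This sidesteps any need to evaluate the kernel on the measure-zero diagonal.

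A second subtlety is whether $\Delta_g$ acts on the kernel of a normal state by multiplication, including when $\tilde g$ is merely a probability measure. Since $g$ is its characteristic function, it is bounded and continuous. I would verify the multiplication rule on rank-one operators $\ketbra{\psi}{\phi}$, where $e^{\frac{\ii}{\hbar}p\hat X}$ multiplies $\psi(x)$ by $e^{\frac{\ii}{\hbar}px}$, and then use Fubini. The result extends to all trace-class operators by linearity and continuity in trace norm: $\Delta_g$ is trace-norm contractive, and the kernel map is continuous from trace norm to $L^2$.

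An alternative check in momentum space is also available. A fixed point is invariant under averaging over conjugation by momentum shifts. Its momentum density $\tilde\rho(p,p)$ then satisfies $\tilde\rho=\tilde g*\tilde\rho$. If $\tilde g$ is not a point mass, iterating this relation spreads $\tilde\rho$ out, and the supremum or $L^2$ norm forces $\tilde\rho=0$. I would keep the position-kernel argument as the main proof.
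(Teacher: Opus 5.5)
Your proof is correct, but it reaches the contradiction by a different route than the paper. The paper, after obtaining $\rho(x,y)=0$ for $x\neq y$, argues that $\rho$ commutes with every multiplication operator $M_f$. It then uses that $\{M_f: f\in L^\infty(\mathbb R)\}$ is maximal abelian to write $\rho=M_h$. Finally it rules this out because a nonzero multiplication operator on the non-atomic space $(\mathbb R,\mathrm{d}x)$ is not compact. You instead use $\mathcal T_1\subset\mathcal T_2$ and the isometry between Hilbert--Schmidt operators and $L^2(\mathbb R^2)$ kernels. Vanishing off the Lebesgue-null diagonal then already gives $\|\rho\|_2=0$, so no commutant or compactness argument is needed. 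Your rank-one, Fubini and trace-norm-continuity justification of the multiplicative kernel action of $\Delta_g$ on normal states also fills in a step the paper takes for granted.

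Your route shows the hypothesis can be weakened. You only use that $\{\xi: g(\xi)=1\}$ is Lebesgue-null. Since $g(\xi_0)=1$ forces $\tilde g$ to be concentrated where $e^{\ii p\xi_0/\hbar}=1$, that set is always a closed subgroup of $\mathbb R$. It is therefore null for every random-kick channel other than the identity.

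The paper's route gives structural information instead. Its commutant step is a statement about arbitrary bounded operators, so it identifies the would-be fixed points as position-diagonal multiplication operators (the formal $p(x)\delta(x-y)$ of Sec.~\ref{sec:measures}) before compactness excludes them from the trace class.

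Your momentum-space aside is best closed not with a supremum or $L^2$ bound, since the momentum density of a normal state need not be bounded or square integrable. Use characteristic functions instead: the fixed-point condition gives $\bigl(1-g(\xi)\bigr)\operatorname{Tr}\bigl(\rho\, e^{\ii\xi\hat P/\hbar}\bigr)=0$. The second factor is continuous and equals $1$ at $\xi=0$, which gives the contradiction.
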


\begin{proof}[Proof sketch]
Let $\mathcal H=L^2(\mathbb R)$, and let $\mathcal T_1(\mathcal H)$ denote the space of trace-class operators on $\mathcal H$. Let $\rho\in\mathcal T_1(\mathcal H)$ and assume $\Delta_g(\rho)=\rho$. By Eq.~\eqref{positionaction},
\begin{align}
  (1-g(x-y))\,\rho(x,y)=0
\end{align}
in the sense of kernels. Since $1-g(\xi)\neq 0$ for $\xi\neq 0$, it follows that $\rho(x,y)=0$ for $x\neq y$, hence $\rho$ commutes with every multiplication operator $M_f$ ($f\in L^\infty(\mathbb R)$). Therefore $\rho$ belongs to the commutant of the von Neumann algebra $\{M_f:f\in L^\infty(\mathbb R)\}$. This algebra is maximal abelian (a MASA) in $B(L^2(\mathbb R))$, so its commutant is itself; thus $\rho=M_h$ for some $h\in L^\infty(\mathbb R)$ \cite{AnantharamanDelarocheGyeongju2012}.

On the non-atomic measure space $(\mathbb R,\mathrm{d}x)$, any nonzero multiplication operator $M_h$ is not compact; since every trace-class operator is compact, $M_h$ can be trace class only if $h=0$ a.e. \cite{reed2012methods}. Hence $\rho=0$, contradicting $\tr\rho=1$. Therefore no trace-class density operator can satisfy $\Delta_g(\rho)=\rho$.
\end{proof}

Proposition~\ref{prop:no_normal_fixed_points} has a clear physical reading: \emph{any} normalizable continuous-variable state has non-vanishing position coherence at some separation scale, and an exact dephasing projector (with nontrivial fixed points) cannot be realized by a regular CPTP channel at finite resolution. This makes fixed-point incoherence a stringent notion, aligned with the widely used intuition that physically preparable continuous-variable states generically possess nonzero coherence in any sharp continuous basis. 

For completeness, we note that one may also consider alternative notions of incoherence based on the image of the dephasing map rather than on its fixed points. In particular, one may call a state incoherent whenever it admits a physical preimage under $\Delta_g$, equivalently whenever the inverse kernel transform is again a state. Since this notion plays only an auxiliary role in the present work, we defer it to Appendix~\ref{app:alternative}.

A complementary option is to replace the smooth kernel $g(x-y)$ in Eq.~\eqref{positionaction} by a sharp top-hat mask,
\begin{align}
  g_\epsilon(x-y)=\Theta(\epsilon-|x-y|),
\end{align}
which leads to the step-function dephasing map
\begin{align}
  \Delta_\Theta^\epsilon:\ \rho(x,y)\mapsto \Theta(\epsilon-|x-y|)\rho(x,y).
\end{align}
Unlike the physically motivated CPTP channel $\Delta_g$, this map is exactly idempotent,
\begin{align}
  (\Delta_\Theta^\epsilon)^2=\Delta_\Theta^\epsilon,
\end{align}
and therefore defines a nontrivial finite-resolution free set through the fixed-point condition $\Delta_\Theta^\epsilon(\rho)=\rho$, equivalently $\rho(x,y)=0$ for $|x-y|>\epsilon$. The price is that the top-hat mask is not smooth and does not define a completely positive map. Thus it should be viewed as a mathematically sharp auxiliary construction rather than as a physical dephasing channel. We relegate its detailed discussion to Appendix~\ref{app:step}.

\section{Measures of Quantum Coherence in Continuous Variables}\label{sec:measures}

Given that $\mathcal I_{\mathrm{fix}}^g$ is empty within normal states under the assumptions of Proposition~\ref{prop:no_normal_fixed_points}, the role of a coherence measure is not to detect membership in a nontrivial free set, but to quantify \emph{the strength of coherence suppression} induced by the physically motivated dephasing $\Delta_g$.

The fixed-point viewpoint also implies that the standard finite-dimensional \emph{variational} definition,
\begin{align}\label{eq:Cvar_bad}
  C_{\mathrm{var}}(\rho)\coloneqq \inf_{\sigma\in\mathcal I_{\mathrm{fix}}^g} S(\rho\Vert\sigma),
\end{align}
is not well-posed on normal states. Since $\mathcal I_{\mathrm{fix}}^g$ contains no trace-class operators, the infimum would be taken over a set of non-normalizable distributions (formally $\sigma(x,y)=p(x)\delta(x-y)$), for which the relative entropy is undefined. We therefore treat Eq.~\eqref{eq:Cvar_bad} only as a statement of ill-definedness and adopt operational quantifiers built directly from the action of $\Delta_g$.

We consider two measures: (i) the relative-entropy $g$-dephasing loss, and (ii) a purity-type $l_2$ $g$-dephasing loss functional. Throughout, states $\rho$ are taken to be trace class, $\rho\in\mathcal T_1(\mathcal H)$, where $\mathcal H=L^2(\mathbb R)$. Since $\mathcal T_1(\mathcal H)\subset\mathcal T_2(\mathcal H)$, i.e.\ every trace-class operator is automatically Hilbert--Schmidt, one has in particular $\|\rho\|_2^2=\tr(\rho^2)<\infty$. We verify which of the axioms \ref{c1}--\ref{c6} each measure satisfies under the class of free operations $\DIO_g$ defined below.

\begin{definition}\label{def:Crel_g}
The \emph{relative-entropy $g$-dephasing loss} is
\begin{align}\label{eq:Crel_g}
  C_{\mathrm{rel}}^{g}(\rho)\coloneqq S\!\big(\rho\Vert\Delta_g(\rho)\big),
\end{align}
whenever the right-hand side is finite.
\end{definition}
In infinite dimensions, $S(\rho\Vert\Delta_g(\rho))$ need not be finite for arbitrary normal states. In particular, the relative entropy diverges whenever $\operatorname{supp}\rho\not\subseteq\operatorname{supp}\Delta_g(\rho)$. This can occur, for example, for sufficiently sharp momentum-kick models that effectively shift the support of the state so that the original state and its dephased version do not have compatible supports. Such a divergence reflects this support mismatch rather than simply a large amount of coherence. Accordingly, in the present work we restrict attention to states for which \eqref{eq:Crel_g} is well defined and finite.

If the dephasing function $g$ is generated by a physically regular measurement model, this pathology is typically absent. In particular, for unbiased measurement noise, corresponding to a vanishing mean momentum transfer, the map $\Delta_g$ does not induce a systematic displacement of the state in momentum space. Under the usual regularity assumptions on the noise distribution, the supports of $\rho$ and $\Delta_g(\rho)$ are then compatible for the class of states considered here, so that $C_{\mathrm{rel}}^{g}(\rho)$ remains finite. Thus, whenever we invoke the relative-entropy $g$-dephasing loss in such a measurement-model setting, we implicitly restrict to this physically regular regime.

\begin{definition}\label{def:C2_g}
The \emph{$l_2$ $g$-dephasing loss} is
\begin{align}\label{eq:C2loss_def}
  C_{2}^{g}(\rho)\coloneqq \|\rho\|_2^2-\|\Delta_g(\rho)\|_2^2 = \tr(\rho^2) - \tr\big((\Delta_g(\rho))^2\big).
\end{align}
\end{definition}

In what follows, most of the proofs concerning $C_2^g$ will be presented for dephasing maps that admit a measurement-model interpretation. In this setting, the associated function $g$ satisfies the contractivity condition $|g(\xi)|\leq 1$ for all separations $\xi$, together with $g(0)=1$. Under this assumption, $C_2^g$ is nonnegative and has a direct interpretation as the Hilbert--Schmidt loss induced by finite-resolution dephasing.

\begin{definition}\label{def:DIOg}
A CPTP map $\Lambda$ is a \emph{$g$-dephasing-covariant operation} if it commutes with $\Delta_g$:
\begin{align}\label{eq:DIOg}
  \Lambda\circ\Delta_g=\Delta_g\circ\Lambda.
\end{align}
We denote this class by $\DIO_g$.
\end{definition}

It is the natural class of operations under which the dephased state $\Delta_g(\rho)$ transforms consistently with the input state $\rho$.

\subsection{Faithfulness}\label{subsec:faithfulness}

\begin{lemma}[Nonnegativity and faithfulness of $C_{\mathrm{rel}}^{g}$]\label{lem:faith_Crel}
For any normal state $\rho$, $C_{\mathrm{rel}}^{g}(\rho)\ge 0$. Furthermore, if $|g(\xi)|<1$ for all $\xi\neq 0$, then the equation $C_{\mathrm{rel}}^{g}(\rho)=0$ has no solution among normal states.
\end{lemma}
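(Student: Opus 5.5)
The plan is to reduce both claims to two standard facts: Klein's inequality for the relative entropy of normal states, and Proposition~\ref{prop:no_normal_fixed_points}.

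\emph{Nonnegativity.} Since $\Delta_g$ is CPTP, $\Delta_g(\rho)$ is again a normal state whenever $\rho$ is, with $\tr\Delta_g(\rho)=\tr\rho=1$. For two normal states on a separable Hilbert space, Lindblad's definition of the relative entropy applies. Take a spectral decomposition $\rho=\sum_i r_i\ketbra{e_i}{e_i}$ and write $S(\rho\Vert\sigma)=\sum_{i,j}|\braket{e_i}{f_j}|^2\bigl(r_i\ln r_i-r_i\ln s_j+s_j-r_i\bigr)$, where $\sigma=\sum_j s_j\ketbra{f_j}{f_j}$. Each summand is nonnegative by the elementary inequality $a\ln a-a\ln b+b-a\ge 0$, and the extra terms $s_j-r_i$ sum to $\tr\sigma-\tr\rho=0$. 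This gives Klein's inequality $S(\rho\Vert\sigma)\ge 0$, with value $+\infty$ allowed. Applying it with $\sigma=\Delta_g(\rho)$ yields $C^g_{\mathrm{rel}}(\rho)\ge 0$ whenever it is defined, including on the finite-value domain adopted after Definition~\ref{def:Crel_g}.

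\emph{Faithfulness.} Here I need the equality case: for normal states, $S(\rho\Vert\sigma)=0$ holds iff $\rho=\sigma$. I would prove this in one of two ways.
\begin{enumerate}
\item Use Pinsker's inequality in infinite dimensions, $S(\rho\Vert\sigma)\ge\tfrac12\|\rho-\sigma\|_1^2$. This follows by monotonicity of the relative entropy under the two-outcome measurement $\{P,\id-P\}$, with $P$ the projector onto the positive part of $\rho-\sigma$, combined with the classical Pinsker inequality.
\item Use the summand decomposition above. Vanishing of the sum forces $r_i=s_j$ whenever $\braket{e_i}{f_j}\neq0$, and hence $\rho=\sigma$.
\end{enumerate}
I prefer the Pinsker route, since it avoids rearrangement issues with infinite sums.

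Given this equality case, suppose $C^g_{\mathrm{rel}}(\rho)=0$. Then $\Delta_g(\rho)=\rho$, so $\rho\in\mathcal I^g_{\mathrm{fix}}$ is a normal density operator. Under the hypothesis $|g(\xi)|<1$ for $\xi\neq0$, this contradicts Proposition~\ref{prop:no_normal_fixed_points}. Hence the equation $C^g_{\mathrm{rel}}(\rho)=0$ has no normal solution.

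\emph{Main obstacle.} The delicate step is making Klein's inequality and its equality case rigorous in infinite dimensions, where $\ln\sigma$ is unbounded and the support of $\Delta_g(\rho)$ may fail to contain that of $\rho$. I would handle this by working with Lindblad's or Araki's definition of the relative entropy and noting that infinite values only strengthen the inequality. The equality case only matters when the relative entropy is finite, in which case the support inclusion holds and Pinsker applies directly. Everything else is a direct appeal to Proposition~\ref{prop:no_normal_fixed_points}.
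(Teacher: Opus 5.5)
Your proposal is correct and follows the paper's proof: Klein's inequality gives nonnegativity, and its equality case turns $C_{\mathrm{rel}}^{g}(\rho)=0$ into the fixed-point equation $\Delta_g(\rho)=\rho$, which Proposition~\ref{prop:no_normal_fixed_points} rules out for normal states. The paper simply assumes that Klein's inequality and its equality case hold in infinite dimensions, whereas you justify them via Lindblad's definition and the Pinsker bound.
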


\begin{proof}
Nonnegativity follows directly from Klein's inequality, which states that for any two density operators $\rho$ and $\sigma$, the relative entropy is non-negative $S(\rho\Vert\sigma) \ge 0$, with equality iff $\rho = \sigma$. In our case, the condition $C_{\mathrm{rel}}^{g}(\rho)=0$ implies that $\rho = \Delta_g(\rho)$, meaning $\rho$ must be a fixed point of the position-dephasing channel $\Delta_g$. Under the assumption that $|g(\xi)| < 1$ for all $\xi \neq 0$, it follows from Proposition~\ref{prop:no_normal_fixed_points} that no normal (trace-class) state can satisfy this fixed-point equation. Specifically, $\Delta_g$ strictly reduces the off-diagonal coherence in the position representation for any state that is not strictly ``diagonal'' in an unphysical (non-normal) sense. Thus, $C_{\mathrm{rel}}^{g}(\rho)$ is strictly positive for all normal states $\rho$.
\end{proof}

\begin{lemma}[Faithfulness properties of $C_2^{g}$]\label{lem:faith_C2}
The $l_2$ dephasing loss $C_2^{g}(\rho)=\|\rho\|_2^2-\|\Delta_g(\rho)\|_2^2$ is faithful with respect to the fixed-point condition:
\begin{align}
  C_2^{g}(\rho)=0 \quad \Longleftrightarrow \quad \Delta_g(\rho)=\rho.
\end{align}
Under the assumptions of Proposition~\ref{prop:no_normal_fixed_points}, this implies $C_2^{g}(\rho)>0$ for every normal state $\rho$.
\end{lemma}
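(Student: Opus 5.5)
The plan is to express $C_2^g$ as an explicit integral over the position kernel and read off both directions from the sign of the integrand. Since $\rho$ is trace class, it is Hilbert--Schmidt, and its kernel $\rho(x,y)$ lies in $L^2(\mathbb R^2)$ with $\|\rho\|_2^2=\iint \dd{x}\dd{y}\,|\rho(x,y)|^2$. By Eq.~\eqref{positionaction}, $\Delta_g(\rho)$ has kernel $g(x-y)\rho(x,y)$. Under the measurement-model assumption $|g|\le 1$, this kernel is again in $L^2(\mathbb R^2)$; moreover $\Delta_g(\rho)$ is trace class because $\Delta_g$ is CPTP. We therefore obtain
\begin{align}
  C_2^g(\rho)=\iint_{\mathbb R^2}\dd{x}\dd{y}\,\bigl(1-|g(x-y)|^2\bigr)\,|\rho(x,y)|^2 .
\end{align}
This identity is the main step. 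The only care needed is to justify the Hilbert--Schmidt norm as the $L^2$ norm of the kernel, which is the standard isometry $\mathcal T_2(L^2(\mathbb R))\cong L^2(\mathbb R^2)$.

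For the direction ($\Leftarrow$): if $\Delta_g(\rho)=\rho$, the two Hilbert--Schmidt norms coincide, so $C_2^g(\rho)=0$.

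For the direction ($\Rightarrow$): the integrand above is nonnegative almost everywhere, since $|g|\le1$. Hence $C_2^g(\rho)=0$ forces $\bigl(1-|g(x-y)|^2\bigr)|\rho(x,y)|^2=0$ almost everywhere. We must upgrade this to the fixed-point condition Eq.~\eqref{eq:fixedpoint_kernel_condition}, namely $(1-g(x-y))\rho(x,y)=0$ almost everywhere. This is where we expect the main obstacle: vanishing of $1-|g|^2$ gives $|g|=1$, not $g=1$, and a pure phase $g(\xi)=\ee^{\ii\theta(\xi)}\neq1$ would break the equivalence. We intend to resolve this by using the structure of $g$ as a characteristic function. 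Suppose $|g(\xi_0)|=1$. Then $\tilde g$ is concentrated on a lattice coset. In the measurement model, $|g(\xi)|<1$ for $\xi\neq0$ (the strict inequality derived from normalizability of the pointer), so $|g(\xi)|=1$ only at $\xi=0$, where $g(0)=1$. On the off-diagonal set we then have $1-|g|^2>0$, which forces $\rho(x,y)=0$ for almost every $x\neq y$. Since the diagonal is a null set in $\mathbb R^2$, we get $(1-g(x-y))\rho(x,y)=0$ almost everywhere, i.e. $\Delta_g(\rho)=\rho$. In the general case without the strict inequality, we would state the result with $|g|$ in place of $g$, or assume real nonnegative $g$, for which $|g|=1$ does imply $g=1$.

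Finally, under the hypotheses of Proposition~\ref{prop:no_normal_fixed_points}, the argument above shows that $C_2^g(\rho)=0$ implies $\rho(x,y)=0$ for almost every $x\neq y$. This means the kernel of $\rho$ vanishes in $L^2(\mathbb R^2)$, so $\rho=0$, contradicting $\tr\rho=1$. This is even more direct than the MASA argument of Proposition~\ref{prop:no_normal_fixed_points}. Since $C_2^g\ge0$, it follows that $C_2^g(\rho)>0$ for every normal state.
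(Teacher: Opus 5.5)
Your proposal is correct. For the equivalence it follows the paper's route exactly: rewrite $C_2^g$ as $\iint(1-|g(x-y)|^2)|\rho(x,y)|^2\,\dd{x}\dd{y}$ and read off both directions. It is, however, more careful at one step, and it takes a different route to positivity.

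\textbf{The $|g|=1$ step.} The paper passes directly from $(1-|g|^2)|\rho|^2=0$ a.e.\ to $g(x-y)\rho(x,y)=\rho(x,y)$ a.e. This silently uses $|g|=1\Rightarrow g=1$. You flag this and close it with the strict inequality $|g(\xi)|<1$ for $\xi\neq0$.

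The worry is real. For a deterministic kick $\tilde g(p)=\delta(p-p_0)$ with $p_0\neq0$, one has $g(\xi)=\mathrm{e}^{\mathrm{i}p_0\xi/\hbar}$. Then $\Delta_g$ is a unitary conjugation and $C_2^g\equiv0$, yet no normal state is fixed.

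Your suggested general fix (use $|g|$, or assume $g\ge0$) is more restrictive than needed:
\begin{itemize}
\item If $|g|=1$ on a set of positive measure, that set contains nonzero $\xi_1,\xi_2$ with irrational ratio. The two resulting lattice constraints force $\tilde g$ to be a point mass.
\item Otherwise $|g|<1$ a.e. Since $\{(x,y):x-y\in N\}$ is null in $\mathbb R^2$ whenever $N$ is null, $C_2^g(\rho)=0$ forces the kernel to vanish a.e.
\end{itemize}
So the stated equivalence fails only for a nonzero deterministic kick. It holds whenever $\tilde g$ is a genuine density (as in the paper's definition) or is unbiased. Note also that $|g|\le1$ holds for every characteristic function, not only in the measurement model.

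\textbf{Positivity.} Here you genuinely depart from the paper. The paper invokes Proposition~\ref{prop:no_normal_fixed_points} (commutant of the multiplication algebra plus non-compactness of $M_h$). You instead observe that a Hilbert--Schmidt kernel vanishing off the Lebesgue-null diagonal is zero in $L^2(\mathbb R^2)$, so $\rho=0$.

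Your route is more elementary and self-contained. It is tailored to exactly the Hilbert--Schmidt setting in which $C_2^g$ lives, and it would in fact give a one-line proof of Proposition~\ref{prop:no_normal_fixed_points} for trace-class states.

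The paper's operator-algebraic route also speaks to bounded, non-normal fixed points. It identifies them with multiplication operators such as $\id$, i.e.\ the formal diagonal states $p(x)\delta(x-y)$.
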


\begin{proof}
For $C_2^{g}$, using Eq.~\eqref{eq:C2loss_def},
\begin{align} \label{eq:integral-form-c2g}
  C_2^{g}(\rho)=\iint_{\mathbb{R}^2} \dd{x}\,\dd{y}\,\bigl(1-|g(x-y)|^2\bigr)\,|\rho(x,y)|^2,
\end{align}
so $C_2^{g}(\rho)=0$ implies $(1-|g(x-y)|^2)|\rho(x,y)|^2=0$ a.e., hence $g(x-y)\rho(x,y)=\rho(x,y)$ a.e., i.e.\ $\Delta_g(\rho)=\rho$ in the Hilbert--Schmidt sense. The converse is immediate. Under Proposition~\ref{prop:no_normal_fixed_points}, no normal state satisfies $\Delta_g(\rho)=\rho$, hence $C_2^{g}(\rho)>0$ for all normal $\rho$.
\end{proof}

\subsection{Monotonicity under $\DIO_g$}\label{subsec:monotonicity}

\begin{lemma}[Monotonicity of $C_{\mathrm{rel}}^{g}$]\label{lem:mono_Crel}
For any $\Lambda\in\DIO_g$,
\begin{align}
C_{\mathrm{rel}}^{g}(\Lambda(\rho))\le C_{\mathrm{rel}}^{g}(\rho).
\end{align}
\end{lemma}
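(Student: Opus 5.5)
The plan is to combine the commutation property of $\DIO_g$ with the data-processing inequality for the Umegaki relative entropy. Both steps are standard, but they must be stated for normal states on $L^2(\mathbb R)$.

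First, fix $\Lambda\in\DIO_g$ and a normal state $\rho$ with $C_{\mathrm{rel}}^{g}(\rho)<\infty$. The definition of $\DIO_g$, Eq.~\eqref{eq:DIOg}, gives the identity
\begin{align}
  \Delta_g(\Lambda(\rho))=\Lambda(\Delta_g(\rho)).
\end{align}
Substituting this into Definition~\ref{def:Crel_g} yields
\begin{align}
  C_{\mathrm{rel}}^{g}(\Lambda(\rho))=S\big(\Lambda(\rho)\,\Vert\,\Delta_g(\Lambda(\rho))\big)=S\big(\Lambda(\rho)\,\Vert\,\Lambda(\Delta_g(\rho))\big).
\end{align}
The dephased output is therefore the image under the same channel $\Lambda$ of the dephased input. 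This is exactly the structure to which monotonicity of relative entropy applies.

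Second, invoke the data-processing inequality $S(\Lambda(\rho)\Vert\Lambda(\sigma))\le S(\rho\Vert\sigma)$ with $\sigma=\Delta_g(\rho)$. This gives $C_{\mathrm{rel}}^{g}(\Lambda(\rho))\le C_{\mathrm{rel}}^{g}(\rho)$. The inequality also shows that the left-hand side is finite whenever the right-hand side is. Hence the output automatically stays within the class of states for which the measure is well defined, which the paper's standing restriction to finite values requires.

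The main obstacle is justifying data processing in infinite dimensions. Here $\rho$ and $\Delta_g(\rho)$ are density operators on $L^2(\mathbb R)$, $\Lambda$ is a CPTP map on $\mathcal T_1(\mathcal H)$, and $\Delta_g(\rho)$ need not be faithful. The intended route is Lindblad's extension of Uhlmann's monotonicity theorem, which holds for arbitrary normal states and trace-preserving completely positive maps on $\mathcal T_1(\mathcal H)$. It is valid for the Umegaki relative entropy and allows the value $+\infty$. If a self-contained argument is preferred, the fallback is an approximation argument. One would restrict to finite-rank spectral truncations of $\rho$ and $\Delta_g(\rho)$, apply the finite-dimensional monotonicity, and pass to the limit. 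This limit step uses lower semicontinuity of relative entropy together with its monotone convergence under such truncations.

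The commutation step needs no care beyond noting that both sides of Eq.~\eqref{eq:DIOg} act on trace-class operators. Non-idempotency of $\Delta_g$ plays no role, since only a single application of $\Delta_g$ enters. Likewise, the emptiness of $\mathcal I_{\mathrm{fix}}^g$ from Proposition~\ref{prop:no_normal_fixed_points} is irrelevant, because the measure is defined without reference to a free set.
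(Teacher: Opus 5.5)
Your proof is correct and follows the paper's argument: use the $\DIO_g$ commutation to write $\Delta_g(\Lambda(\rho))=\Lambda(\Delta_g(\rho))$, then apply data processing with $\sigma=\Delta_g(\rho)$. Your additional points — that data processing holds for normal states on $L^2(\mathbb R)$ via the Lindblad--Uhlmann monotonicity theorem, allowing the value $+\infty$, and that finiteness of $C_{\mathrm{rel}}^{g}(\rho)$ therefore carries over to $\Lambda(\rho)$ — usefully supplement what the paper leaves implicit.
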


\begin{proof}
Since $\Lambda\in\DIO_g$, we have $\Lambda\circ\Delta_g=\Delta_g\circ\Lambda$. Therefore
$C_{\mathrm{rel}}^{g}(\Lambda(\rho)) = S\!\left(\Lambda(\rho)\,\middle\|\,\Delta_g(\Lambda(\rho))\right) = S\!\left(\Lambda(\rho)\,\middle\|\,\Lambda(\Delta_g(\rho))\right)$.
By the data-processing inequality for the quantum relative entropy, $S(\Lambda(\rho)\Vert \Lambda(\sigma))\le S(\rho\Vert \sigma)$, hence
$C_{\mathrm{rel}}^{g}(\Lambda(\rho)) \le S(\rho\Vert \Delta_g(\rho)) = C_{\mathrm{rel}}^{g}(\rho)$.
\end{proof}

\begin{proposition}
The functional $C_2^{g}$ is not monotone under $\DIO_g$.
\label{failure-mono}
\end{proposition}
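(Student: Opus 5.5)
Since the statement is negative, it suffices to exhibit one $\Lambda\in\DIO_g$ and one normal state $\rho$ with $C_2^g(\Lambda(\rho))>C_2^g(\rho)$. The simplest members of $\DIO_g$ are maps that commute with every unitary $\ee^{\frac{\ii}{\hbar}p\hat X}$. One family is \emph{replacement channels} $\Lambda_\sigma(\rho)=\tr(\rho)\,\sigma$, which commute with $\Delta_g$ only if $\Delta_g(\sigma)=\sigma$; this is impossible by Proposition~\ref{prop:no_normal_fixed_points}. A second family is \emph{unitaries diagonal in position}, $U_\phi=\ee^{\ii\phi(\hat X)}$. These commute with $\ee^{\frac{\ii}{\hbar}p\hat X}$, hence with $\Delta_g$, but they leave $|\rho(x,y)|$ invariant and so cannot change $C_2^g$ in the integral form \eqref{eq:integral-form-c2g}. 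The natural candidate is therefore a non-unitary channel in $\DIO_g$ that lowers purity while increasing the relative weight of far-off-diagonal kernel elements. The plan is to use a \emph{random position shift}, $\Lambda(\rho)=\int \dd{a}\,q(a)\,\ee^{-\frac{\ii}{\hbar}a\hat P}\rho\,\ee^{\frac{\ii}{\hbar}a\hat P}$. Translations $T_a$ satisfy $T_a\ee^{\frac{\ii}{\hbar}p\hat X}T_a^\dagger=\ee^{-\frac{\ii}{\hbar}pa}\ee^{\frac{\ii}{\hbar}p\hat X}$, where the phase cancels in the conjugation. Equivalently, $\Delta_g$ acts by multiplication with $g(x-y)$, which is translation invariant. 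So $\Lambda\in\DIO_g$. More generally, any mixture of unitaries commuting with $\hat X$ up to phase will do, and a cleaner choice may be a mixture of position translations with a position-diagonal phase.

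A random translation alone does not change the ratio structure in $x-y$. In the kernel it acts as $\rho(x,y)\mapsto\int \dd{a}\,q(a)\rho(x-a,y-a)$, which averages along the diagonal direction. This lowers $\|\rho\|_2^2$, and it lowers $\|\Delta_g\rho\|_2^2$ in a correlated way. The idea is to choose $\rho$ as a two-component mixture, $\rho=\tfrac12(\rho_{\rm near}+\rho_{\rm far})$. Here $\rho_{\rm near}$ is a broad, nearly incoherent-at-scale-$\xi_g$ mixed state, for example a mixture of many separated narrow packets, whose kernel is concentrated at $|x-y|\ll$ the width of $g$. The component $\rho_{\rm far}$ is a cat-like superposition with separation $\gg$ the width of $g$. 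Translational averaging destroys purity of the broad component much less than that of the localized component, or vice versa, depending on supports. This redistributes weight between the terms, so that the factor $(1-|g|^2)$ weighting can increase. To keep the computation tractable, I would work in the Fourier variables of the centre coordinate $R=(x+y)/2$ at fixed $r=x-y$. Write $\rho(x,y)=\int \dd{k}\,\hat\rho_r(k)\ee^{\ii kR}$. Then $\Lambda$ multiplies $\hat\rho_r(k)$ by $\hat q(k)$, and $C_2^g(\Lambda\rho)=\iint \dd{r}\dd{k}\,(1-|g(r)|^2)|\hat q(k)|^2|\hat\rho_r(k)|^2$. Monotonicity fails as soon as this exceeds the same integral without the factor $|\hat q(k)|^2$. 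This is impossible since $|\hat q|\le1$. So translations alone are monotone, and the plan must shift to a different element of $\DIO_g$.

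The genuinely useful channels are therefore those acting nontrivially in $r$. Examples are \emph{scalings} $x\mapsto \lambda x$, or more generally maps $\rho(x,y)\mapsto\rho'$ that stretch the separation variable. For a Gaussian kernel $g$, a channel commuting with $\Delta_g$ must preserve the multiplier $g(x-y)$. Nonunitary examples include Gaussian additive-noise channels in momentum, $\Delta_h$ with another kernel $h$. These commute with $\Delta_g$ since all such multipliers commute, so $\Delta_h\in\DIO_g$. I would take $\Lambda=\Delta_h$ and compute $C_2^g(\Delta_h\rho)=\iint(1-|g|^2)|h|^2|\rho|^2$, which again is $\le C_2^g(\rho)$ when $|h|\le1$. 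The obstacle is thus that the obvious commuting channels are all monotone for $C_2^g$. The decisive step is to find a $\DIO_g$ map that increases purity. The plan is to combine a position-diagonal but non-unital operation with a unitary: \emph{position squeezing} $S_\lambda:\psi(x)\mapsto\sqrt\lambda\,\psi(\lambda x)$ does not commute with $\Delta_g$ unless $g$ is scale invariant. So for this example I would choose $g$ with $g(\lambda\xi)=g(\xi)$ on the relevant range, or a periodic/step-like $g$ with $|g|<1$. Then $S_\lambda$ concentrates $|\rho(x,y)|^2$ onto separations where $1-|g|^2$ is larger, which yields a strict increase. Verifying that such a kernel arises from some $\tilde g\ge0$ (Bochner's theorem) while meeting the commutation exactly is the main obstacle, and I would settle it first with an explicit example.
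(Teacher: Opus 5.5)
\paragraph{There is a genuine gap: the proposal never produces a counterexample.} The squeezing construction you end on cannot work, for two independent reasons.

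\emph{First, squeezing is not in $\DIO_g$.} Membership in $\DIO_g$ is an identity of maps on all states, not ``on the relevant range''. On kernels, $S_\lambda\Delta_g(\rho)S_\lambda^\dagger$ carries the factor $g(\lambda(x-y))$, while $\Delta_g(S_\lambda\rho S_\lambda^\dagger)$ carries $g(x-y)$. So you need $g(\lambda\xi)=g(\xi)$ for all $\xi$. Since $g$ is the characteristic function of a probability density, it is continuous with $g(0)=1$. Iterating gives $g(\xi)=\lim_n g(\lambda^{\pm n}\xi)=1$ for $0<\lambda\neq 1$, so $\Delta_g$ is the identity and $C_2^g\equiv 0$. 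A periodic $g$ does not give scale invariance, and step-like masks are not positive definite (see the step-function appendix).

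\emph{Second, no unitary in $\DIO_g$ can change $C_2^g$.} If $U\Delta_g(\cdot)U^\dagger=\Delta_g(U\cdot U^\dagger)$, then $C_2^g(U\rho U^\dagger)=\|\rho\|_2^2-\|U\Delta_g(\rho)U^\dagger\|_2^2=C_2^g(\rho)$ by unitary invariance of the Hilbert--Schmidt norm. In your own variables, a scale-invariant $g$ gives the weight $1-|g((u-v)/\lambda)|^2=1-|g(u-v)|^2$ after the change of variables. So there is no ``concentration onto separations where $1-|g|^2$ is larger''.

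Even a working example for a specially tuned kernel would not prove the statement for the given $g$, such as the Gaussian one.

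\paragraph{The missing idea.} You did name the right target: a $\DIO_g$ map that increases purity. But every concrete candidate you test (position-diagonal unitaries, random translations, $\Delta_h$, squeezing) is a unitary or a mixture of unitaries. By the triangle inequality, such channels cannot increase $\|\rho\|_2$. Your closing sentence gestures at a non-unital position-diagonal operation, but you never construct one.

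What is needed is a \emph{non-unital} channel built from the two commuting ingredients you already had: projections $P_S$ onto position regions and translations $V_a$, used as separate Kraus operators rather than as unitary mixtures. Each branch $\sigma\mapsto V_{-a}P_S\sigma P_S V_a$ acts on kernels as $\sigma(x,y)\mapsto\chi_S(x+a)\chi_S(y+a)\sigma(x+a,y+a)$. It therefore commutes with multiplication by $g(x-y)$.

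The paper uses exactly this:
\begin{itemize}
\item Take $\phi$ with bounded support $C$ and disjoint translates $A=C+a$, $B=C+b$.
\item Use the Kraus operators $V_{-a}P_A$, $V_{-b}P_B$, $P_{(A\cup B)^c}$.
\item Use the input $\rho=\frac12\bigl(|\phi_a\rangle\langle\phi_a|+|\phi_b\rangle\langle\phi_b|\bigr)$.
\end{itemize}
Disjointness gives $C_2^g(\rho)=\frac12 C_2^g(|\phi\rangle\langle\phi|)$. The channel folds both copies onto $\phi$, so $\Lambda(\rho)=|\phi\rangle\langle\phi|$ and $C_2^g$ doubles. This exploits the quadratic homogeneity of $C_2^g$ and works for any $g$ with $C_2^g(|\phi\rangle\langle\phi|)>0$.
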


\begin{proof}
Take a normalized $\phi\in L^2(\mathbb R)$ with a bounded support $C=\operatorname{supp}\phi$ and such that $C_2^g(|\phi\rangle\langle\phi|)>0$. Let $V_x$ and $P_S$ be the position-translation and projection operators, where $x\in \mathbb{R}$ and $S$ is an interval in $\mathbb{R}$. For $C=\operatorname{supp}\phi$, choose $a, b$ such that $A=C+a$ and $B=C+b$ are disjoint.

Define the CPTP channel $\Lambda(\rho) = K_A\rho K_A^\dagger + K_B\rho K_B^\dagger + K_0\rho K_0^\dagger$ using Kraus operators $K_A = V_{-a}P_A$, $K_B = V_{-b}P_B$, and $K_0 = P_{(A\cup B)^c}$. Since translations preserve relative distance, $\Lambda$ commutes with the dephasing map $\Delta_g$, ensuring $\Lambda \in \DIO_g$.

For $\phi_x = V_x\phi$, consider the input state
$\rho = \frac{1}{2}|\phi_a\rangle\langle\phi_a| + \frac{1}{2}|\phi_b\rangle\langle\phi_b|$. 
Due to disjoint supports and translation invariance of $C_2^g$, we have $C_2^g(\rho) = \frac{1}{2}C_2^g(|\phi\rangle\langle\phi|)$.

Applying $\Lambda$ yields $K_A\rho K_A^\dagger = K_B\rho K_B^\dagger = \frac{1}{2}|\phi\rangle\langle\phi|$ and $K_0\rho K_0^\dagger = 0$. Thus, $\Lambda(\rho) = |\phi\rangle\langle\phi|$, which leads to:
\begin{align}
    C_2^g(\Lambda(\rho)) = C_2^g(|\phi\rangle\langle\phi|) > \frac{1}{2}C_2^g(|\phi\rangle\langle\phi|) = C_2^g(\rho).
\end{align}
This strict increase violates monotonicity.
\end{proof}

\subsection{Strong monotonicity under $\DIO_g$}\label{subsec:strong_monotonicity}

\begin{lemma}[Strong monotonicity of $C_{\mathrm{rel}}^{g}$]\label{lem:strong_Crel}
Consider a quantum instrument $\{\Lambda_k\}_k$ with completely positive, dephasing-covariant branches; $\sum_k\Lambda_k$ is trace-preserving. Then
\begin{align}
    C_{\mathrm{rel}}^{g}(\rho)\ge \sum_k p_k\, C_{\mathrm{rel}}^{g}(\rho_k),
\end{align}
where $p_k=\tr[\Lambda_k(\rho)]$ and $\rho_k=\Lambda_k(\rho)/p_k$.
\end{lemma}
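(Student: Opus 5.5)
The plan is to embed the instrument into a single dephasing-covariant channel with a classical register, and then apply the data-processing inequality exactly as in Lemma~\ref{lem:mono_Crel}.

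\textbf{Step 1: the flagged channel.} Define
\begin{align}
  \widetilde\Lambda(\rho)\coloneqq\sum_k \Lambda_k(\rho)\otimes\ketbra{k}{k}_R,
\end{align}
where $R$ is an auxiliary classical register with orthonormal basis $\{\ket{k}\}$. Because each $\Lambda_k$ is completely positive and $\sum_k\Lambda_k$ is trace preserving, $\widetilde\Lambda$ is CPTP.

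Each branch commutes with $\Delta_g$, so
\begin{align}
  \widetilde\Lambda(\Delta_g(\rho))=\sum_k \Delta_g(\Lambda_k(\rho))\otimes\ketbra{k}{k}.
\end{align}
This is exactly $(\Delta_g\otimes\mathrm{id}_R)\circ\widetilde\Lambda(\rho)$. We deliberately do not dephase the register, so no dephasing convention on $R$ is needed.

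\textbf{Step 2: data processing.} Monotonicity of the quantum relative entropy under CPTP maps (Lindblad--Uhlmann) gives
\begin{align}
  S(\rho\Vert\Delta_g(\rho))\ge S\Big(\sum_k p_k\rho_k\otimes\ketbra{k}{k}\,\Big\Vert\,\sum_k p_k\Delta_g(\rho_k)\otimes\ketbra{k}{k}\Big).
\end{align}
Here we used $\Lambda_k(\Delta_g(\rho))=\Delta_g(\Lambda_k(\rho))=p_k\Delta_g(\rho_k)$.

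\textbf{Step 3: evaluate the block-diagonal relative entropy.} Both arguments are block diagonal in $R$ with the same weights $p_k$, and $\Delta_g$ is trace preserving, so $\tr\Delta_g(\rho_k)=1$. The standard identity
\begin{align}
  S\Big(\bigoplus_k p_k\rho_k\,\Big\Vert\,\bigoplus_k q_k\sigma_k\Big)=\sum_k p_k S(\rho_k\Vert\sigma_k)+\sum_k p_k\ln\frac{p_k}{q_k}
\end{align}
applies with $q_k=p_k$. The classical term vanishes, and the right-hand side becomes $\sum_k p_kC_{\mathrm{rel}}^g(\rho_k)$, which is the claim.

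\textbf{Main obstacle: infinite-dimensional technicalities.} The algebra above is routine; the care goes into three points.
\begin{itemize}
  \item Data processing must hold for normal states on $L^2(\mathbb R)\otimes\ell^2$. This is covered by Uhlmann's theorem for relative entropy on von Neumann algebras (Araki's definition), and it remains valid when values are infinite.
  \item The block-additivity identity must be justified with Araki's relative entropy when the sum over $k$ is countably infinite or an integral. I would use lower semicontinuity together with the finite-sum case. For a continuous outcome label, I would coarse-grain into countable bins and pass to the limit.
  \item Branches with $p_k=0$ are simply dropped.
\end{itemize}
Finiteness of $C_{\mathrm{rel}}^g(\rho)$, which the paper assumes, makes the inequality meaningful, and it forces each $C_{\mathrm{rel}}^g(\rho_k)$ with $p_k>0$ to be finite as well.
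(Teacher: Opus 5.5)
Your proposal is correct and takes essentially the same route as the paper: the paper also forms the flagged channel $\mathcal M(\cdot)=\bigoplus_k\Lambda_k(\cdot)$, applies data processing, uses branch covariance to write $\mathcal M(\Delta_g(\rho))=\bigoplus_k\Delta_g(\Lambda_k(\rho))$, and splits the block-diagonal relative entropy using $S(p\rho\Vert p\sigma)=p\,S(\rho\Vert\sigma)$. Your extra remarks address points the paper leaves implicit: Araki's relative entropy, countable or continuous outcome sets, zero-probability branches, and finiteness of each $C_{\mathrm{rel}}^{g}(\rho_k)$.
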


\begin{proof}
Define the flagged CPTP map $\mathcal M(\cdot)=\bigoplus_k \Lambda_k(\cdot)$. By the data-processing inequality,
\begin{align}
    S(\rho\Vert \Delta_g(\rho)) \ge S(\mathcal M(\rho)\Vert \mathcal M(\Delta_g(\rho))).
\end{align}
Since each branch is dephasing-covariant, $\Lambda_k\circ\Delta_g=\Delta_g\circ\Lambda_k$, and therefore
\begin{align}
    \mathcal M(\Delta_g(\rho)) = \bigoplus_k \Lambda_k(\Delta_g(\rho)) = \bigoplus_k \Delta_g(\Lambda_k(\rho)).
\end{align}
Using additivity of relative entropy on block-diagonal operators, together with $S(p\rho\Vert p\sigma)=p\,S(\rho\Vert \sigma)$, we obtain
\begin{align}
    S(\mathcal M(\rho)\Vert \mathcal M(\Delta_g(\rho)))
    &=\sum_k S(\Lambda_k(\rho)\Vert \Delta_g(\Lambda_k(\rho))) \nonumber\\
    &=\sum_k p_k\, S(\rho_k\Vert \Delta_g(\rho_k)) \nonumber\\
    &=\sum_k p_k\, C_{\mathrm{rel}}^{g}(\rho_k),
\end{align}
which proves the claim.
\end{proof}

\begin{proposition}
The functional $C_2^g$ fails strong monotonicity under $\DIO_g$.
\end{proposition}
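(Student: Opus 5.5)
The plan is to reuse the counterexample from Proposition~\ref{failure-mono}, since a violation of strong monotonicity can be built from the same translate-and-project Kraus operators. I fix a normalized $\phi\in L^2(\mathbb R)$ with bounded support $C$ and $c\coloneqq C_2^g(|\phi\rangle\langle\phi|)>0$. I choose shifts $a,b$ such that $A=C+a$ and $B=C+b$ are disjoint, and I take the mixture $\rho=\tfrac12|\phi_a\rangle\langle\phi_a|+\tfrac12|\phi_b\rangle\langle\phi_b|$. The instrument has the three branches $\Lambda_A(\cdot)=K_A\cdot K_A^\dagger$, $\Lambda_B(\cdot)=K_B\cdot K_B^\dagger$ and $\Lambda_0(\cdot)=K_0\cdot K_0^\dagger$, with $K_A=V_{-a}P_A$, $K_B=V_{-b}P_B$ and $K_0=P_{(A\cup B)^c}$.

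The first step is to check that each branch is completely positive and dephasing-covariant individually, not only their sum. In the position representation,
\begin{align}
  \Lambda_A(\rho)(x,y)=\chi_A(x+a)\,\chi_A(y+a)\,\rho(x+a,y+a).
\end{align}
Multiplying by $g(x-y)$ commutes with this operation, because $(x+a)-(y+a)=x-y$ and the projection acts only by multiplying the kernel by characteristic functions. The same computation handles $\Lambda_B$ and $\Lambda_0$. Their sum is trace preserving because $K_A^\dagger K_A+K_B^\dagger K_B+K_0^\dagger K_0=P_A+P_B+P_{(A\cup B)^c}=\id$.

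The second step is to evaluate both sides of the inequality. By the integral form \eqref{eq:integral-form-c2g}, $C_2^g$ is invariant under translations. The two components have disjoint supports, so the kernel of $\rho$ is $\tfrac12\phi_a\phi_a^*+\tfrac12\phi_b\phi_b^*$ with no cross terms between the two pieces. This gives $|\rho(x,y)|^2=\tfrac14|\phi_a(x)\phi_a^*(y)|^2+\tfrac14|\phi_b(x)\phi_b^*(y)|^2$ and hence $C_2^g(\rho)=\tfrac12 c$. For the branches, $p_A=p_B=\tfrac12$ and $p_0=0$, and the normalized outputs are $\rho_A=\rho_B=|\phi\rangle\langle\phi|$. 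The averaged coherence is therefore
\begin{align}
  \sum_k p_k\,C_2^g(\rho_k)=c>\tfrac12 c=C_2^g(\rho),
\end{align}
which violates strong monotonicity.

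The main obstacle is ensuring that $c>0$ for some compactly supported $\phi$. Lemma~\ref{lem:faith_C2} together with Proposition~\ref{prop:no_normal_fixed_points} guarantees this whenever $|g(\xi)|<1$ for $\xi\neq0$. Under only $|g|\le1$, I would choose the width of $C$ larger than a separation $\xi_0$ at which $|g(\xi_0)|<1$, which exists for any nontrivial dephasing. Continuity of $g$ (it is a characteristic function) then gives a set of positive measure where $1-|g|^2>0$. A second, simpler remark closes the argument: the monotonicity failure of Proposition~\ref{failure-mono} already implies this result. Indeed, $C_2^g(\Lambda(\rho))\le\sum_k p_kC_2^g(\rho_k)$ whenever the outputs $\rho_k$ are equal, as they are here, since then $\Lambda(\rho)=\rho_A$. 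The explicit branchwise computation above is nonetheless the cleanest presentation.
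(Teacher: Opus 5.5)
Your proposal is correct and follows the paper's proof essentially verbatim. It uses the same three-branch translate-and-project instrument and the same mixture $\rho=\tfrac12|\phi_a\rangle\langle\phi_a|+\tfrac12|\phi_b\rangle\langle\phi_b|$, and it reaches the same comparison $\sum_k p_k C_2^g(\rho_k)=C_2^g(|\phi\rangle\langle\phi|)>\tfrac12 C_2^g(|\phi\rangle\langle\phi|)=C_2^g(\rho)$.

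Your explicit branchwise covariance check is a welcome addition, as is your argument that a compactly supported $\phi$ with $C_2^g(|\phi\rangle\langle\phi|)>0$ exists, which the paper simply assumes. For that existence argument, take for instance $\phi$ to be the normalized indicator of an interval longer than $|\xi_0|$. Requiring only that the support's width exceed $|\xi_0|$ is not enough: $x-y$ must actually put weight near $\xi_0$ under $|\phi(x)|^2|\phi(y)|^2$.
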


\begin{proof}
Consider the instrument $\{\Lambda_A, \Lambda_B, \Lambda_0\}$ with branches $\Lambda_j(\rho) = K_j\rho K_j^\dagger$ and the same input state $\rho = \frac{1}{2}|\phi_a\rangle\langle\phi_a| + \frac{1}{2}|\phi_b\rangle\langle\phi_b|$ from the Proposition \ref{failure-mono}. We have already established that each branch belongs to $\DIO_g$ and that $C_2^g(\rho) = \frac{1}{2}C_2^g(|\phi\rangle\langle\phi|)$.

The branch probabilities are $p_A = p_B = \frac{1}{2}$ and $p_0 = 0$. The corresponding normalized post-measurement states are: $\rho_A = \frac{\Lambda_A(\rho)}{p_A} = |\phi\rangle\langle\phi|$ and $\rho_B = \frac{\Lambda_B(\rho)}{p_B} = |\phi\rangle\langle\phi|$. Evaluating the expected coherence after the measurement yields:
\begin{align}
    \sum_j p_j C_2^g(\rho_j) &= \frac{1}{2}C_2^g(\rho_A) + \frac{1}{2}C_2^g(\rho_B) = C_2^g(|\phi\rangle\langle\phi|) \nonumber \\
    &> \frac{1}{2}C_2^g(|\phi\rangle\langle\phi|) = C_2^g(\rho).
\end{align}
This strict increase demonstrates the failure of the strong monotonicity inequality.
\end{proof}

\subsection{Convexity}\label{subsec:convexity}

\begin{lemma}[Convexity]\label{lem:convexity_measures}
Both the relative-entropy dephasing loss $C_{\mathrm{rel}}^{g}$ and the $l_2$ $g$-dephasing loss $C_2^{g}(\rho)=\|\rho\|_2^2-\|\Delta_g(\rho)\|_2^2$ are convex.
\end{lemma}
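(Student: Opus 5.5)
For $C_{\mathrm{rel}}^{g}$ we will rely on the joint convexity of the quantum relative entropy together with the linearity of $\Delta_g$. Take $\rho=\sum_n q_n\rho_n$ with $q_n\ge 0$ and $\sum_n q_n=1$. Linearity of the channel \eqref{randomkicks} gives $\Delta_g(\rho)=\sum_n q_n\Delta_g(\rho_n)$. Joint convexity of $(\rho,\sigma)\mapsto S(\rho\Vert\sigma)$ then yields
\begin{align}
  S\Big(\sum_n q_n\rho_n\,\Big\Vert\,\sum_n q_n\Delta_g(\rho_n)\Big)\le \sum_n q_n\,S\big(\rho_n\Vert\Delta_g(\rho_n)\big),
\end{align}
which is exactly $C_{\mathrm{rel}}^{g}(\rho)\le\sum_n q_n C_{\mathrm{rel}}^{g}(\rho_n)$.

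In infinite dimensions we will use Lindblad's version of joint convexity for normal states, which is valid for relative entropies that take values in $[0,\infty]$. Under the finiteness restriction adopted after Definition~\ref{def:Crel_g}, the inequality is then meaningful. If some $C_{\mathrm{rel}}^{g}(\rho_n)$ is infinite, the right-hand side is infinite and the inequality holds trivially. For countable mixtures we can either appeal to the lower semicontinuity of relative entropy or restrict to finite convex combinations.

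For $C_2^{g}$ we will use the integral representation \eqref{eq:integral-form-c2g} from the proof of Lemma~\ref{lem:faith_C2}:
\begin{align}
  C_2^{g}(\rho)=\iint_{\mathbb{R}^2}\dd{x}\,\dd{y}\,w(x-y)\,|\rho(x,y)|^2,\qquad w(\xi)\coloneqq 1-|g(\xi)|^2 .
\end{align}
Under the measurement-model assumption $|g|\le 1$ we have $w\ge 0$. The functional $\rho\mapsto|\rho(x,y)|^2$ is convex pointwise in $(x,y)$, since it is the composition of the linear map $\rho\mapsto\rho(x,y)$ with the convex function $z\mapsto|z|^2$. Multiplying by the nonnegative weight $w(x-y)$ and integrating preserves convexity. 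This gives $C_2^{g}(\sum_n q_n\rho_n)\le\sum_n q_n C_2^{g}(\rho_n)$.

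An equivalent way to see this is to write $C_2^{g}(\rho)=\langle\rho,(\mathrm{id}-\Delta_g^\dagger\Delta_g)\rho\rangle_{\mathrm{HS}}$, which is a quadratic form of a positive semidefinite operator on $\mathcal T_2$. For a convex combination the defect is
\begin{align}
  \sum_n q_n C_2^{g}(\rho_n)-C_2^{g}(\rho)=\tfrac12\sum_{n,m}q_nq_m\,C_2^{g}(\rho_n-\rho_m)\ge 0,
\end{align}
where $C_2^{g}$ is extended to Hermitian Hilbert--Schmidt operators by the same integral.

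The main obstacle is technical rather than structural. For $C_2^g$ the key point is positivity of the weight, $|g|\le1$; without it the quadratic form is indefinite and convexity fails, so this assumption will be stated explicitly. For $C_{\mathrm{rel}}^g$ the care goes into invoking joint convexity in its infinite-dimensional form.
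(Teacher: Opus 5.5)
Your proposal is correct and follows the paper's proof. For $C_{\mathrm{rel}}^g$ it uses joint convexity together with linearity of $\Delta_g$. For $C_2^g$ it uses the nonnegative-weight integral form, which the paper phrases as $C_2^g(\rho)=\|L_g\rho\|_2^2$, where $L_g$ multiplies the kernel by $\sqrt{1-|g(x-y)|^2}$; your pointwise and defect-identity arguments unpack the same fact.

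One small remark: you do not need $|g|\le 1$ as a separate hypothesis. It holds automatically for every random-kick channel, because $g$ is the characteristic function of a probability density.
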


\begin{proof}
Convexity of $C_{\mathrm{rel}}^{g}$ follows from joint convexity of $S(\cdot\Vert\cdot)$ and linearity of $\Delta_g$:
\begin{align}
  C_{\mathrm{rel}}^{g}\!\left(\sum_i q_i\rho_i\right)
  &= S\!\left(\sum_i q_i\rho_i \,\Big\Vert\, \Delta_g\!\left(\sum_i q_i\rho_i\right)\right) \nonumber\\
  &= S\!\left(\sum_i q_i\rho_i \,\Big\Vert\, \sum_i q_i\Delta_g(\rho_i)\right) \nonumber\\
  &\le \sum_i q_i S\!\big(\rho_i\Vert\Delta_g(\rho_i)\big).
\end{align}

For $C_2^{g}$, using Eq.~\eqref{eq:integral-form-c2g}, define the linear map $L_g$:  $\rho(x,y) \mapsto \sqrt{1-|g(x-y)|^2}\,\rho(x,y)$. Since $|g(\xi)|\le 1$ for a random-kick channel, this map is well defined on Hilbert--Schmidt kernels and $C_2^g(\rho)=\|L_g\rho\|_2^2$. The squared norm of a linear map is convex, hence $C_2^g$ is convex.
\end{proof}

\subsection{Uniqueness on pure states}\label{subsec:uniqueness_pure}
In discrete systems with idempotent dephasing $\Delta$, $S(\rho \Vert \Delta(\rho)) = S(\Delta(\rho))$ for pure states. In the continuous case, this fails because $\Delta_g$ is not a projection.

\begin{lemma}[Departure from entropy of the dephased state]\label{lem:c5_fail}
For a pure state $\rho=\ketbra{\psi}{\psi}$, the relative entropy of coherence reduces to 
\begin{align} \label{eq:Crel-pure}
    C_{\mathrm{rel}}^{g}(\rho) = -\bra{\psi}\ln(\Delta_g(\rho))\ket{\psi}.
\end{align}
For the pure state, the $l_2$ $g$-dephasing loss takes a simple form
\begin{align} \label{eq:C2-pure}
    C_2^{g}(\rho) = 1 - \tr(\Delta_g(\rho)^2).
\end{align}
\end{lemma}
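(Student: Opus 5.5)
The plan is to evaluate both definitions directly on a rank-one input, using only that $\rho=\ketbra{\psi}{\psi}$ is a projector.

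\textbf{Relative-entropy loss.} Start from Definition~\ref{def:Crel_g},
\begin{align}
  C_{\mathrm{rel}}^{g}(\rho)=\tr(\rho\ln\rho)-\tr\bigl(\rho\ln\Delta_g(\rho)\bigr).
\end{align}
\begin{enumerate}
  \item The first term is $-S(\rho)$. Since $\rho$ has spectrum $\{1,0,0,\dots\}$ and $0\ln 0=0$, it vanishes.
  \item For the second term, $\tr(\ketbra{\psi}{\psi}A)=\bra{\psi}A\ket{\psi}$ with $A=\ln\Delta_g(\rho)$. This gives Eq.~\eqref{eq:Crel-pure}.
\end{enumerate}
The only delicate step is making sense of $\ln\Delta_g(\rho)$. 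The operator $\Delta_g(\rho)$ is a positive trace-class state, so it is compact with discrete spectrum accumulating at $0$. Its logarithm is therefore an unbounded self-adjoint operator defined through the functional calculus on $\operatorname{supp}\Delta_g(\rho)$. I would handle this as follows:
\begin{itemize}
  \item Invoke the standing assumption, stated after Definition~\ref{def:Crel_g}, that $C_{\mathrm{rel}}^g(\rho)$ is finite. This requires $\operatorname{supp}\rho\subseteq\operatorname{supp}\Delta_g(\rho)$.
  \item Interpret the expectation value via the spectral decomposition $\Delta_g(\rho)=\sum_j\lambda_j\ketbra{e_j}{e_j}$, namely
  \begin{align}
    -\bra{\psi}\ln\Delta_g(\rho)\ket{\psi}=-\sum_j|\langle e_j|\psi\rangle|^2\ln\lambda_j .
  \end{align}
  This is exactly the Lindblad/Araki form of $S(\rho\Vert\sigma)$ for rank-one $\rho$.
  \item Note that every summand is nonnegative, because $\lambda_j\le 1$.
\end{itemize}
Together these show the expression is well defined, possibly $+\infty$, and coincides with the definition. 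This is where I expect the main obstacle: keeping the infinite-dimensional relative entropy rigorous rather than formal.

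\textbf{Hilbert--Schmidt loss.} From Definition~\ref{def:C2_g}, $C_2^g(\rho)=\tr(\rho^2)-\tr\bigl(\Delta_g(\rho)^2\bigr)$. For a pure state $\rho^2=\rho$, so $\tr(\rho^2)=\tr\rho=1$, which yields Eq.~\eqref{eq:C2-pure} immediately.

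As a consistency check, Eq.~\eqref{eq:integral-form-c2g} gives the equivalent form
\begin{align}
  \tr\bigl(\Delta_g(\rho)^2\bigr)=\iint \dd{x}\dd{y}\,|g(x-y)|^2|\psi(x)|^2|\psi(y)|^2 .
\end{align}
The right-hand side is finite because $|g|\le 1$.

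Finally, I would remark why this departs from the finite-dimensional result. In finite dimensions, idempotency $\Delta^2=\Delta$ together with $\tr[\rho\ln\Delta(\rho)]=\tr[\Delta(\rho)\ln\Delta(\rho)]$ gives $C_{\mathrm{rel}}(\rho)=S(\Delta(\rho))$. Here that second identity fails because $\Delta_g$ is not a projection, so the expectation value in Eq.~\eqref{eq:Crel-pure} cannot be rewritten as $S(\Delta_g(\rho))$.
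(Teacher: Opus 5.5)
Your proposal is correct and follows the paper's own argument. Use $S(\rho)=0$ for a pure state to reduce $S(\rho\Vert\Delta_g(\rho))$ to $-\bra{\psi}\ln\Delta_g(\rho)\ket{\psi}$, and use $\tr(\rho^2)=1$ for the $l_2$ loss. Your treatment of the unbounded logarithm goes further than the paper's short proof, which leaves this implicit: you read it through the spectral (Lindblad-type) sum, whose terms are nonnegative because all eigenvalues satisfy $\lambda_j\le 1$.
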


\begin{proof}
For any pure state, $S(\rho) = 0$, so $S(\rho \Vert \sigma) = -\tr(\rho \ln \sigma) = -\bra{\psi}\ln \sigma \ket{\psi}$. Setting $\sigma = \Delta_g(\rho)$ yields the result for $C_{\mathrm{rel}}^{g}$. Similarly, since $\tr(\rho^2)=1$ for any pure state, $C_2^{g}$ becomes a simple measure of the ``purity loss'' induced by the channel $\Delta_g$. The identity $C_{\text{rel}} = S(\Delta_g(\rho))$ would require $\Delta_g$ to be a conditional expectation on pure states, which is precluded by its non-idempotency.
\end{proof}

\subsection{Additivity and Multiplicativity}\label{subsec:additivity}

\begin{lemma}[Additivity of $C_{\mathrm{rel}}^{g}$ and scaling of $C_2^g$]\label{lem:add_Crel}
For a product state $\rho_{AB}=\rho_A\otimes\rho_B$ and product dephasing $\Delta_{g_A} \otimes \Delta_{g_B}$, the relative entropy of coherence is additive 
\begin{align}
    C_{\mathrm{rel}}^{g_A,g_B}(\rho_{AB}) = C_{\mathrm{rel}}^{g_A}(\rho_A) + C_{\mathrm{rel}}^{g_B}(\rho_B).
\end{align}
In contrast, the $C_2^{g}$ obeys the relation 
\begin{align}
    C_2^{g_A,g_B}(\rho_{AB}) = &\tr(\rho_A^2)\tr(\rho_B^2) \nonumber \\
    &- \tr\big(\Delta_{g_A}(\rho_A)^2\big)\tr\big(\Delta_{g_B}(\rho_B)^2\big).
\end{align}
\end{lemma}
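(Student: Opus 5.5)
The plan is to reduce both claims to elementary tensor-product identities, using that the product dephasing $\Delta_{g_A}\otimes\Delta_{g_B}$ acts factorwise on product states.

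\textbf{Step 1 (action on product states).} Since $\Delta_{g_A}\otimes\Delta_{g_B}$ is linear, and on $\rho_A\otimes\rho_B$ the kernel $\rho_A(x_A,y_A)\rho_B(x_B,y_B)$ is multiplied by $g_A(x_A-y_A)g_B(x_B-y_B)$, we get
\begin{align}
(\Delta_{g_A}\otimes\Delta_{g_B})(\rho_A\otimes\rho_B)=\Delta_{g_A}(\rho_A)\otimes\Delta_{g_B}(\rho_B).
\end{align}
Equivalently, this follows directly from the random-kick form Eq.~\eqref{randomkicks}. The double integral over $(p_A,p_B)$ with product weight $\tilde g_A(p_A)\tilde g_B(p_B)$ factorizes, because the unitaries act on different tensor factors.

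\textbf{Step 2 (additivity of $C_{\mathrm{rel}}$).} Next I would invoke the standard additivity of quantum relative entropy under tensor products, $S(\rho_A\otimes\rho_B\Vert\sigma_A\otimes\sigma_B)=S(\rho_A\Vert\sigma_A)+S(\rho_B\Vert\sigma_B)$, with $\sigma_{A,B}=\Delta_{g_{A,B}}(\rho_{A,B})$. In finite dimensions this is the identity $\ln(\sigma_A\otimes\sigma_B)=\ln\sigma_A\otimes\id+\id\otimes\ln\sigma_B$ combined with partial traces. In infinite dimensions I would instead cite Lindblad's definition of relative entropy for trace-class operators, for which additivity holds whenever both terms are finite. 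That finiteness is the standing restriction made after Definition~\ref{def:Crel_g}.

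The main obstacle is precisely this infinite-dimensional justification. The logarithm of a dephased state is unbounded, and supports must be compatible. The cleanest route is the spectral-decomposition formula
\begin{align}
S(\rho\Vert\sigma)=\sum_{i,j}|\langle a_i|b_j\rangle|^2\bigl(\lambda_i\ln\lambda_i-\lambda_i\ln\mu_j\bigr),
\end{align}
where $\rho=\sum_i\lambda_i|a_i\rangle\langle a_i|$ and $\sigma=\sum_j\mu_j|b_j\rangle\langle b_j|$. Eigenvectors of the product operators are products, so the overlaps factorize and the double sum splits into the two terms. The support condition also factorizes: $\operatorname{supp}(\sigma_A\otimes\sigma_B)=\operatorname{supp}\sigma_A\otimes\operatorname{supp}\sigma_B$.

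\textbf{Step 3 ($C_2$).} Here I would use multiplicativity of the Hilbert--Schmidt norm, $\tr\bigl((X\otimes Y)^2\bigr)=\tr(X^2)\tr(Y^2)$. Applied both to $\rho_A\otimes\rho_B$ and, via Step 1, to $\Delta_{g_A}(\rho_A)\otimes\Delta_{g_B}(\rho_B)$, and then substituted into Definition~\ref{def:C2_g}, this gives the stated formula directly. No regularity issue arises, since all operators involved are Hilbert--Schmidt. Finally, I would remark that expanding the product formula shows $C_2$ is not additive in general, which contrasts with Step 2.
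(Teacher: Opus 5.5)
Your proposal is correct and follows essentially the same route as the paper, which simply invokes additivity of the quantum relative entropy on product states with $\sigma_{A,B}=\Delta_{g_{A,B}}(\rho_{A,B})$ and the identity $\tr[(A\otimes B)^2]=\tr(A^2)\tr(B^2)$. Your explicit factorization $(\Delta_{g_A}\otimes\Delta_{g_B})(\rho_A\otimes\rho_B)=\Delta_{g_A}(\rho_A)\otimes\Delta_{g_B}(\rho_B)$ and the Lindblad-formula justification in infinite dimensions fill in details that the paper leaves implicit.
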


\begin{proof}
The additivity of $C_{\mathrm{rel}}^{g}$ follows directly from the additivity of quantum relative entropy on product states: $S(\rho_A \otimes \rho_B \Vert \sigma_A \otimes \sigma_B) = S(\rho_A \Vert \sigma_A) + S(\rho_B \Vert \sigma_B)$, applied to $\sigma_{A,B} = \Delta_{g_{A,B}}(\rho_{A,B})$. The identity $\tr[(A \otimes B)^2] = \tr(A^2)\tr(B^2)$ shows that $C_2^{g}$ does not scale linearly, unlike its entropic counterpart.
\end{proof}

\vspace{1em}

\paragraph*{Summary of the axioms.} 
The relative-entropy dephasing loss $C_{\mathrm{rel}}^{g}$ serves as the most robust quantifier within the fixed-point framework. Under $\DIO_g$ it satisfies non-negativity and fixed-point faithfulness \ref{c1}, monotonicity \ref{c2}, strong monotonicity \ref{c3}, convexity \ref{c4}, and additivity on product systems \ref{c6}, provided the relative entropy remains finite. Condition \ref{c5} does not hold in general because $\Delta_g$ is not an idempotent conditional expectation; instead, it reduces to the expectation value, Eq.~\eqref{eq:Crel-pure}.

In contrast, the $l_2$ $g$-dephasing loss $C_2^{g}$ is fixed-point faithful \ref{c1} in the Hilbert--Schmidt sense and convex \ref{c4}; however, according to Proposition~\ref{prop:no_normal_fixed_points}, it never vanishes on normal states. It fails monotonicity \ref{c2}, \ref{c3} under $\DIO_g$, and lacks additivity \ref{c6}, inheriting a non-linear scaling from the Hilbert--Schmidt norm. For pure states it reduces to Eq.~\eqref{eq:C2-pure}; in spite of remaining genuinely sensitive to coherence, it does not satisfy the standard entropic pure-state identity in \ref{c5}. 

The properties of these two measures are summarized in Table~\ref{tab:coherence_axioms}.

\begin{table}[h!]
\centering
\caption{Comparative analysis of coherence quantifiers against the resource-theoretic axioms \ref{c1}--\ref{c6} under $\DIO_g$.}
\label{tab:coherence_axioms}
\begin{tabular}{lcccccc}
\hline\hline
Measure & \ref{c1} & \ref{c2} & \ref{c3} & \ref{c4} & \ref{c5} & \ref{c6} \\ 
\hline
$C_{\mathrm{rel}}^{g}$ & \checkmark & \checkmark & \checkmark & \checkmark & \texttimes & \checkmark \\
$C_2^{g}$ & \checkmark & \texttimes & \texttimes & \checkmark & \texttimes & \texttimes \\
\hline\hline
\end{tabular}
\end{table}

\section{Threshold witnesses for continuous-basis coherence}\label{sec:witnesses}
In the fixed-point framework adopted in this work, the usual notion of a coherence witness relative to the set $\mathcal I_{\mathrm{fix}}^g$ becomes vacuous. Indeed, for physically relevant kernels Proposition~\ref{prop:no_normal_fixed_points} implies that $\mathcal I_{\mathrm{fix}}^g$ contains no normal states. Hence the condition
\begin{align}
  \Tr(W\sigma)\ge 0
  \qquad
  \forall\,\sigma\in\mathcal I_{\mathrm{fix}}^g
\end{align}
is empty on the physical state space. A meaningful replacement is therefore not a witness of exact incoherence, but a witness certifying that the amount of $g$-coherence exceeds a prescribed finite threshold.

Let $C^g$ be a convex coherence diagnostic, such as the relative-entropy dephasing loss $C_{\mathrm{rel}}^g$ or the Hilbert--Schmidt dephasing loss $C_2^g$.

\begin{definition}
    For a fixed threshold $c>0$, define the sublevel set
    \begin{align}\label{eq:threshold_set_general}
      \mathcal S_c^g
      \coloneqq
      \{\rho\ \text{state}: C^g(\rho)\le c\}.
    \end{align}
    Whenever $C^g$ is convex, $\mathcal S_c^g$ is a convex set.
\end{definition} 

\begin{definition}
    A threshold witness is a bounded Hermitian operator $W_c$ such that
    \begin{align}
      \forall\,\sigma\in\mathcal S_c^g \quad \Tr(W_c\sigma)\ge 0,
    \end{align}
    while for some state $\rho$,
    \begin{align}
      \Tr(W_c\rho)<0.
    \end{align}
\end{definition}
A negative expectation value certifies that $\rho\notin\mathcal S_c^g$, and therefore that
\begin{align}
  C^g(\rho)>c.
\end{align}
This restores the standard witness logic: the operator does not separate the state from an empty set of exactly incoherent states, but from the physically meaningful convex set of states whose $g$-coherence is below the chosen threshold.

For the relative-entropy measure, one may take $C^g=C_{\mathrm{rel}}^g$. Since $C_{\mathrm{rel}}^g$ is convex, the corresponding sublevel sets are convex and can be separated by bounded Hermitian operators. This gives threshold witnesses for relative-entropic $g$-coherence. For experimental purposes, however, a more transparent construction is obtained from the Hilbert--Schmidt dephasing loss.

For the random-kick channel, the $l_2$ dephasing loss can be written as
\begin{align}\label{eq:C2_integral_witness_section}
  C_2^g(\rho) = \iint \dd{x}\dd{y}\left(1-|g(x-y)|^2\right)|\rho(x,y)|^2.
\end{align}
Thus $C_2^g$ measures the Hilbert--Schmidt weight of the density matrix suppressed by the dephasing kernel. In particular, Eq.~\eqref{eq:C2_integral_witness_section} shows that $C_2^g$ is convex as a squared norm of a linear transformation of $\rho$, although, as shown above, it is not monotone under the full class $\DIO_g$. Therefore it can be used to define threshold witnesses, but not a full resource monotone.

\begin{example}
To make the construction explicit, we consider two normalized wavepackets $\ket L$ and $\ket R$, localized around positions separated by a distance $d$ and with negligible overlap. We assume that both packets are narrow compared with the scale on which $g$ varies. In this approximation, the dephasing map acts on the coherence between the packets as $\Delta_g(\ketbra{L}{R})\simeq g(d)\ketbra{L}{R}$ and $\Delta_g(\ketbra{R}{L})\simeq g(d)^*\ketbra{R}{L}$.

For a state supported on $\mathrm{span}\{\ket L,\ket R\}$, which we write as
\begin{align}
  \rho= \begin{pmatrix}
  p & c\\
  c^* & 1-p
  \end{pmatrix}_{\{L,R\}},
\end{align}
we obtain
\begin{align}\label{eq:C2_two_packet}
  C_2^g(\rho) \simeq 2\bigl(1-|g(d)|^2\bigr)|c|^2.
\end{align}
Thus, in this two-packet sector, $C_2^g$ is directly controlled by the off-diagonal amplitude $c$ and by the amount of coherence suppression at the separation $d$.

This gives an analytic threshold witness. We introduce the phase-sensitive interference observable
\begin{align}
  X_\theta \coloneqq e^{-\ii\theta}\ketbra{L}{R} + e^{\ii\theta}\ketbra{R}{L},
\end{align}
for which $\Tr(X_\theta\rho)=2\operatorname{Re}(e^{-\ii\theta}c)$. If a state $\sigma$ satisfies $C_2^g(\sigma)\le c_0$, then Eq.~\eqref{eq:C2_two_packet} implies
\begin{align}
  \left|\Tr(X_\theta\sigma)\right| \le \sqrt{\frac{2c_0}{1-|g(d)|^2}}.
\end{align}
The phase $\theta$ can be chosen, or scanned, so that the measured interference quadrature is positive and maximal. Therefore the operator
\begin{align}\label{eq:C2_threshold_witness}
  W_{c_0,\theta} \coloneqq \sqrt{\frac{2c_0}{1-|g(d)|^2}}\,\id - X_\theta
\end{align}
satisfies $\Tr(W_{c_0,\theta}\sigma)\ge 0$ for all states $\sigma$ with $C_2^g(\sigma)\le c_0$. Conversely, if an experiment gives
\begin{align} \label{eq:witness-condition-1}
  \left|\Tr(X_\theta\rho)\right| > \sqrt{\frac{2c_0}{1-|g(d)|^2}},
\end{align}
then $\Tr(W_{c_0,\theta}\rho)<0$, which certifies that $C_2^g(\rho)>c_0$.

For the balanced coherent superposition $\ket{\psi_\theta}=\frac{1}{\sqrt2}(\ket L+e^{\ii\theta}\ket R)$, we have $|c|=\frac{1}{2}$, and therefore
\begin{align}
  C_2^g(\ketbra{\psi_\theta}{\psi_\theta}) \simeq \frac{1-|g(d)|^2}{2}.
\end{align}
The maximum certifiable two-packet $g$-coherence therefore increases with the separation $d$ whenever $|g(d)|$ decreases. This reproduces the expected physical behavior: coherences between well-separated packets are more strongly affected by finite-resolution dephasing than coherences between nearby components.

The observable $X_\theta$ has the direct interpretation of the interference observable measured in a two-path experiment. In the double-slit setting, the localized states $\ket L$ and $\ket R$ represent the two paths, and detection at a point $x$ on the screen selects a relative propagation phase $\theta(x)$. In the far-field approximation, for slits placed symmetrically at positions $\pm \frac{d}{2}$, this phase is $\theta(x) \simeq \frac{kd}{L}x$, where $L$ is the distance to the screen and $k$ is the wavevector. The detected intensity is then, after normalizing the slowly varying single-slit envelope to unity,
\begin{align}
  I(x)=\frac12+\operatorname{Re}(e^{-\ii\theta(x)}c)
       =\frac12+\frac12\Tr(X_{\theta(x)}\rho).
\end{align}
Equivalently, with the present normalization of $X_\theta$, we have $\Tr(X_{\theta(x)}\rho)=2I(x)-1$. Thus $X_\theta$ measures precisely the oscillating part of the interference pattern, while its phase $\theta$ is scanned by moving along the detection screen. The witness condition, Eq.~\eqref{eq:witness-condition-1}, can therefore be read directly as a threshold condition on the observed fringe contrast:
\begin{align}
  I(x)>\frac{1}{2}\left(\sqrt{\frac{2c_0}{1-|g(d)|^2}}+1\right).
\end{align}
Equivalently, after maximizing over the phase, the usual visibility satisfies $V=\max_\theta\Tr(X_\theta\rho)=2|c|$, and hence
\begin{align}
  C_2^g(\rho)\simeq \frac{1-|g(d)|^2}{2}V^2.
\end{align}
In this sense, the threshold witness $W_{c_0,\theta}$ reduces, in the double-slit realization, to a visibility \cite{mandel1996optical, englert1996fringe} witness: sufficiently large interference fringes certify more than the threshold amount of $g$-coherence at separation $d$.
\end{example}

\section{Example: Gaussian Wavepacket in a Newtonian Potential}\label{sec:example}

As an illustration of the continuous-variable framework developed above, we consider a Gaussian wavepacket evolving in a Newtonian potential and evaluate the two coherence quantifiers introduced in Section~\ref{sec:measures}, namely the relative-entropy $g$-dephasing loss $C_{\mathrm{rel}}^{g}(\rho)=S(\rho\Vert\Delta_g(\rho))$ and the $l_2$ $g$-dephasing loss $C_2^{g}(\rho)=\|\rho\|_2^2-\|\Delta_g(\rho)\|_2^2$. This example is physically relevant for matter-wave interferometry and for massive particles subject to weak gravitational fields, where position coherence is naturally probed by coarse monitoring or random momentum-transfer models.

To obtain explicit formulas, we specialize to a Gaussian dephasing kernel. Let the momentum-kick distribution be
\begin{align}\label{eq:gaussian_kick_distribution}
  \tilde g(p)=\frac{1}{\sqrt{2\pi}\,\eta}\exp\!\left(-\frac{p^2}{2\eta^2}\right),
\end{align}
with momentum-spread parameter $\eta>0$. Its characteristic function is
\begin{align}\label{eq:gaussian_g_kernel}
  g(\xi)=\exp\!\left(-\frac{\eta^2\xi^2}{2\hbar^2}\right) =\exp\!\left(-\frac{\xi^2}{2\ell_g^2}\right),
\end{align}
where $\ell_g\coloneqq \frac{\hbar}{\eta}$ sets the spatial coherence scale of the monitoring channel. The action of $\Delta_g$ is therefore 
\begin{align}
    \Delta_g:\rho(x,y)\mapsto \exp(-\frac{(x-y)^2}{2\ell_g^2})\,\rho(x,y),
\end{align}
which is the explicit Gaussian dephasing kernel used throughout this example.

\subsection{Initial Gaussian state}

Consider a normalized Gaussian wavepacket with vanishing mean momentum,
\begin{align}\label{eq:initial_gaussian}
  \psi_0(x)=\frac{1}{(2\pi\sigma_0^2)^{1/4}} \exp\!\left[-\frac{(x-x_0)^2}{4\sigma_0^2}\right],
\end{align}
where $\sigma_0>0$ is the initial spatial width and $x_0\gg \sigma_0$. Let $\rho_0=\ketbra{\psi_0}{\psi_0}$. Using Eq.~\eqref{eq:C2-pure} together with Eqs.~\eqref{eq:initial_gaussian}-\eqref{eq:gaussian_g_kernel}, we find
\begin{align}\label{eq:C2_initial_gaussian}
  C_2^g(\rho_0) =1-\frac{1}{\sqrt{1+4\sigma_0^2/\ell_g^2}}.
\end{align}
In particular, $C_2^g(\rho_0)\to 0$ for $\frac{\sigma_0}{\ell_g}\to 0$, while it approaches $1$ for $\frac{\sigma_0}{\ell_g}\to\infty$.

For the relative-entropy dephasing loss, although the exact closed form is less transparent than for $C_2^g$, Jensen's inequality applied to Eq.~\eqref{eq:Crel-pure} yields the lower bound
\begin{align}\label{eq:Crel_initial_bound}
  C_{\mathrm{rel}}^g(\rho_0) &\ge -\ln\!\Big(\bra{\psi_0}\Delta_g(\rho_0)\ket{\psi_0}\Big) = \frac12 \ln\!\left(1+\frac{2\sigma_0^2}{\ell_g^2}\right).
\end{align}
Thus both coherence quantifiers increase monotonically with the ratio $\frac{\sigma_0}{\ell_g}$: broader wavepackets possess larger position coherence relative to the fixed monitoring scale $\ell_g$.

\subsection{Evolution in a Newtonian potential}

Let a particle of mass $m$ evolve on $x>0$ under the Hamiltonian
\begin{align}
  \hat H=\frac{\hat p^2}{2m}-\frac{GmM}{\hat x}.
\end{align}
For times during which the packet remains well localized around $x_0$ with $x_0\gg \sigma_0$, we may expand the potential to second order as $V(x)\approx V(x_0)+V'(x_0)(x-x_0)+\frac12V''(x_0)(x-x_0)^2$, where $V(x)=-\frac{GmM}{x}$, $V'(x_0)=\frac{GmM}{x_0^2}$, and $V''(x_0)=-\frac{2GmM}{x_0^3}$. The center follows the classical trajectory $x_c(t)\approx x_0+\frac12 a_0 t^2$ with $a_0=-\frac{GM}{x_0^2}$, while the curvature corresponds to an inverted harmonic oscillator with $\Omega^2=V''(x_0)/m=-\frac{2GM}{x_0^3}$.

Defining $\kappa\coloneqq \sqrt{\frac{2GM}{x_0^3}}$, the Gaussian packet $\psi_t$ remains Gaussian within this quadratic approximation, with time-dependent width
\begin{align}\label{eq:sigma_t_newton}
  \sigma_t^2 \approx \sigma_0^2\cosh^2(\kappa t) + \left(\frac{\hbar}{2m\sigma_0\kappa}\right)^2\sinh^2(\kappa t).
\end{align}

Since the dephasing kernel depends only on the separation $x-y$, both $C_2^g$ and the lower bound for $C_{\mathrm{rel}}^g$ depend only on the instantaneous width $\sigma_t$, not on the packet center $x_c(t)$. Replacing $\sigma_0$ by $\sigma_t$ in Eqs.~\eqref{eq:C2_initial_gaussian}-\eqref{eq:Crel_initial_bound} therefore gives
\begin{align}\label{eq:C2_time_gaussian}
  C_2^g(\ketbra{\psi_t}{\psi_t}) &= 1-\frac{1}{\sqrt{1+4\sigma_t^2/\ell_g^2}}, \\
  C_{\mathrm{rel}}^g(\ketbra{\psi_t}{\psi_t}) &\ge \frac12 \ln\!\left(1+\frac{2\sigma_t^2}{\ell_g^2}\right).\label{eq:Crel_time_bound}
\end{align}

Eqs.~\eqref{eq:C2_time_gaussian}-\eqref{eq:Crel_time_bound} are strictly increasing functions of $\sigma_t$. Thus, within the present framework, the coherence of the evolving packet increases as the state becomes more spatially delocalized relative to the fixed monitoring scale $\ell_g$. In the Newtonian case, the inverted curvature enhances spreading compared with free motion, so the coherence quantified by $C_2^g$ and by the lower bound on $C_{\mathrm{rel}}^g$ grows correspondingly in time within the validity of the quadratic approximation.

As an additional illustration, in Fig.~\ref{fig:grav-decoh} we plot the lower-bound expression for the relative-entropy dephasing loss relative to its initial value. The parameter choice is inspired by the scales commonly discussed in Bose--Marletto--Vedral (BMV) and related quantum gravity-induced entanglement of matter proposals: mesoscopic masses of order $10^{-14}\,\mathrm{kg}$, spatial separations or closest approaches of order $10^{-4}\,\mathrm{m}$, and interaction times of order seconds \cite{marletto2017gravitationally, schut2024micrometer, aziz2025classical}.

\begin{figure}[t]
    \centering
    \includegraphics[width=\linewidth]{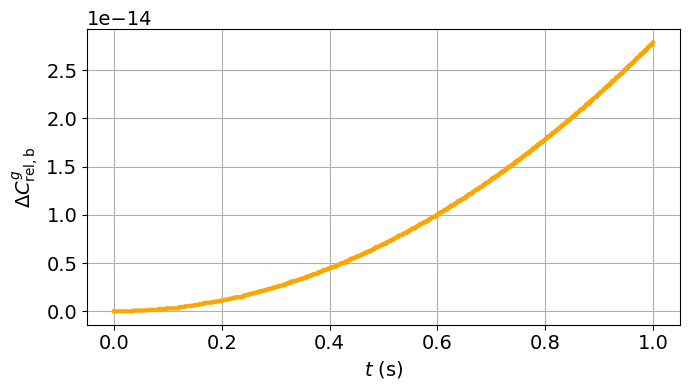}
    \caption{Time dependence of the lower-bound expression for the relative-entropy dephasing loss, shown relative to its initial value for a Gaussian wavepacket evolving in a gravitational potential. We plot $\Delta C_{\mathrm{rel,b}}^g(t)=C_{\mathrm{rel,b}}^g(\rho_t)-C_{\mathrm{rel,b}}^g(\rho_0)$, where $C_{\mathrm{rel,b}}^g$ denotes the lower-bound expressions given by Eqs.~\eqref{eq:Crel_initial_bound} and \eqref{eq:Crel_time_bound}. The parameters are $m=M=10^{-14}\,\mathrm{kg}$, $x_0=200\,\mu\mathrm{m}$, $\sigma_0=10\,\mu\mathrm{m}$, $\ell_g=20\,\mu\mathrm{m}$, and $t_{\max}=1\,\mathrm{s}$.}
    \label{fig:grav-decoh}
\end{figure}

The evolution of a Gaussian wavepacket in a gravitational potential illustrates the central feature of the present continuous-variable construction: coherence is measured relative to a physically motivated spatial-resolution scale encoded in the dephasing channel $\Delta_g$. Narrow packets with $\sigma_t\ll \ell_g$ are only weakly affected by the monitoring and therefore carry little position coherence in this sense, whereas broad packets with $\sigma_t\gtrsim \ell_g$ possess substantial coherence across spatial separations that are resolved by the dephasing map. The Newtonian evolution then provides a simple dynamical setting in which this physically defined coherence changes in a controlled and analytically tractable manner.

\section{Conclusion and Outlook}
We have developed a resource-theoretic framework for coherence in the continuous position basis based on a physically motivated dephasing channel. In contrast to finite dimensions, the natural reference vectors in the continuum are generalized eigenstates rather than normalizable states, so sharp basis dephasing cannot be directly used as a physical resource-destroying map. We instead define operationally meaningful dephasing through a CPTP channel generated by random momentum kicks, equivalently by finite-resolution position monitoring.

This approach leads to a stringent fixed-point notion of incoherence: for nontrivial dephasing kernels, no normal trace-class state is exactly incoherent. Within this framework, we identified the natural class of free operations as those commuting with the dephasing map and studied two coherence quantifiers. The relative-entropy dephasing loss turns out to be the more robust measure, satisfying the main resource-theoretic requirements under the free operations considered. By contrast, the $l_2$ dephasing loss, although convex and experimentally transparent, fails key monotonicity properties and therefore should be viewed as a more limited quantifier.

We also showed how coherence can be certified operationally through threshold witnesses. In the two-path case, these witnesses acquire a direct interferometric meaning: sufficiently large fringe visibility certifies that the state contains more than a prescribed amount of $g$-coherence at the path separation. Finally, the example of a Gaussian wavepacket in a gravitational potential illustrates how the framework captures spatial coherence relative to a finite monitoring scale in a concrete dynamical scenario relevant to BMV-inspired regimes.

Natural directions for further work include applications to non-Markovian, relativistic, and field-theoretic settings.

\section*{Acknowledgments}
We acknowledge support from the International Network on Acausal Quantum Technologies (INAQT), supported by the EPSRC under grant EP/W026910/1, and from the Knut and Alice Wallenberg Foundation through the Wallenberg Initiative for Network and Quantum Information (WINQ). The authors have benefited from the activities of COST Action CA23115: Relativistic Quantum Information, funded by European Cooperation in Science and Technology. KS also thanks Nordita for providing a summer internship opportunity.

\bibliography{bibliography}


\appendix

\section{Alternative notions of incoherence}\label{app:alternative}

The main text adopts the fixed-point notion of incoherence associated with the CPTP dephasing channel $\Delta_g$ in Eqs.~\eqref{randomkicks}--\eqref{positionaction}. For physically relevant kernels satisfying $|g(\xi)|<1$ for $\xi\neq 0$, Proposition~\ref{prop:no_normal_fixed_points} shows that this free set is empty within normal states. For completeness, we record here an alternative notion based on the image of $\Delta_g$.

A natural alternative is to call a state incoherent whenever it lies in the image of the dephasing map.

\begin{definition}\label{app:def:image_incoherent}
Given a dephasing channel $\Delta_g$ of the form Eqs.~\eqref{randomkicks}--\eqref{positionaction}, define
\begin{align}
  \mathcal I_{\mathrm{img}}^g \coloneqq \bigl\{\rho\ \text{state}:\ \exists\ \sigma\ \text{state such that}\ \rho=\Delta_g(\sigma)\bigr\}.
\end{align}
\end{definition}

Whenever $g(\xi)\neq 0$ almost everywhere, one may formally define the inverse kernel transform
\begin{align}\label{app_eq:inverseDelta}
  \bigl[\Delta_g^{-1}(\rho)\bigr](x,y)\coloneqq \frac{\rho(x,y)}{g(x-y)}.
\end{align}
This is only a \emph{partial inverse}: it is defined only for those operators for which the right-hand side corresponds to a valid operator. Note that this can fail in general, because the kernel $\bigl[\Delta_g^{-1}(\rho)\bigr](x,y)$ may lead to divergences when applied to $L^2(\mathbb{R})$ functions.

\begin{proposition}\label{app:prop:image_inverse}
Assume $g(\xi)\neq 0$ almost everywhere. A state $\rho$ belongs to $\mathcal I_{\mathrm{img}}^g$ if and only if $\Delta_g^{-1}(\rho)$ is well defined and is itself a state.
\end{proposition}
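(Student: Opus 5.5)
The proof is a two-direction argument at the level of position-space kernels, using only the multiplicative form of $\Delta_g$ in Eq.~\eqref{positionaction} and the fact that $g(\xi)\neq 0$ almost everywhere.

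\emph{Forward direction.} Assume $\rho\in\mathcal I_{\mathrm{img}}^g$, so $\rho=\Delta_g(\sigma)$ for some state $\sigma$. By Eq.~\eqref{positionaction}, $\rho(x,y)=g(x-y)\sigma(x,y)$ for almost all $(x,y)$. The set $\{(x,y):g(x-y)=0\}$ is the preimage of a Lebesgue-null set in $\mathbb R$ under the submersion $(x,y)\mapsto x-y$. It is therefore null in $\mathbb R^2$ (by Fubini). Off this set we may divide, which gives $\rho(x,y)/g(x-y)=\sigma(x,y)$ almost everywhere. Hence the kernel defining $\Delta_g^{-1}(\rho)$ in Eq.~\eqref{app_eq:inverseDelta} agrees almost everywhere with the kernel of the trace-class state $\sigma$. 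So $\Delta_g^{-1}(\rho)$ is well defined, equals $\sigma$, and is a state.

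\emph{Converse.} Suppose $\tau\coloneqq\Delta_g^{-1}(\rho)$ is well defined and is a state. Its kernel is $\tau(x,y)=\rho(x,y)/g(x-y)$ almost everywhere. Applying Eq.~\eqref{positionaction} gives $[\Delta_g(\tau)](x,y)=g(x-y)\tau(x,y)=\rho(x,y)$ almost everywhere. Two Hilbert--Schmidt operators with almost-everywhere equal kernels coincide, so $\Delta_g(\tau)=\rho$ and $\rho\in\mathcal I_{\mathrm{img}}^g$.

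\emph{Well-definedness of the kernel action.} Both directions rely on the kernel action being legitimate for trace-class operators. I would note that trace-class operators are Hilbert--Schmidt, so they are in one-to-one correspondence with $L^2(\mathbb R^2)$ kernels. I would also note that $\Delta_g$, being CPTP, maps states to states, so the kernel $g(x-y)\tau(x,y)$ really is the kernel of a trace-class operator. Boundedness $|g|\le 1$ guarantees this product stays in $L^2(\mathbb R^2)$.

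\emph{Main obstacle.} The delicate step is fixing what ``well defined'' means in the converse. I would adopt the convention that $\Delta_g^{-1}(\rho)$ is well defined exactly when the measurable function $\rho(x,y)/g(x-y)$ is the kernel of a Hilbert--Schmidt operator. With that convention, being ``a state'' adds positivity, trace class, and unit trace. Unit trace needs no separate argument: $g(0)=1$ is consistent with the diagonal being preserved, and $\tr\tau=\tr\Delta_g(\tau)=\tr\rho=1$ follows automatically from trace preservation of $\Delta_g$.
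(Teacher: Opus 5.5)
Your proposal is correct and follows the same two-direction kernel argument as the paper: in the forward direction $\rho=\Delta_g(\sigma)$ gives $\Delta_g^{-1}(\rho)=\sigma$, and in the converse $\rho=\Delta_g(\Delta_g^{-1}(\rho))$. The extra care you take in showing that $\{(x,y):g(x-y)=0\}$ is null in $\mathbb R^2$, and in identifying Hilbert--Schmidt operators through their a.e.\ kernels, makes explicit what the paper's two-line proof leaves implicit.
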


\begin{proof}
If $\rho=\Delta_g(\sigma)$ for some state $\sigma$, then by Eq.~\eqref{positionaction} one has $\Delta_g^{-1}(\rho)=\sigma$. Conversely, if $\Delta_g^{-1}(\rho)$ is a state, then $\rho=\Delta_g(\Delta_g^{-1}(\rho))$.
\end{proof}

Thus, inverse dephasing gives an equivalent characterization of image-based incoherence: a state is incoherent in this sense precisely when its inverse dephasing is again a physical state.

\begin{proposition}\label{app:prop:pure_not_image}
Assume $0<|g(\xi)|<1$ for almost every $\xi\neq 0$. Then no pure state belongs to $\mathcal I_{\mathrm{img}}^g$.
\end{proposition}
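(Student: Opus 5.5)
Suppose, for contradiction, that a pure state $\rho=\ketbra{\psi}{\psi}$ lies in $\mathcal I_{\mathrm{img}}^g$. By Proposition~\ref{app:prop:image_inverse}, $\sigma\coloneqq\Delta_g^{-1}(\rho)$ is then a state, with kernel $\sigma(x,y)=\psi(x)\psi^*(y)/g(x-y)$. The plan is to compare purities. Every state has purity at most one, so $\tr(\sigma^2)\le 1$. We will show that this forces $\tr(\sigma^2)>1$ under the hypothesis on $g$.

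The key computation uses the Hilbert--Schmidt integral form, as in Eq.~\eqref{eq:integral-form-c2g}:
\begin{align}
  \tr(\sigma^2)=\iint_{\mathbb R^2}\dd{x}\dd{y}\,\frac{|\psi(x)|^2|\psi(y)|^2}{|g(x-y)|^2}.
\end{align}
This identity is justified because $\sigma$ is trace class, hence Hilbert--Schmidt, and its $L^2$ kernel norm equals $\tr(\sigma^\dagger\sigma)=\tr(\sigma^2)$. Next we use $|g(\xi)|<1$ for a.e.\ $\xi\neq 0$. The diagonal $\{x=y\}$ has Lebesgue measure zero, so $1/|g(x-y)|^2>1$ for almost every $(x,y)$. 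Therefore
\begin{align}
  \tr(\sigma^2) > \iint \dd{x}\dd{y}\,|\psi(x)|^2|\psi(y)|^2=\|\psi\|^4=1.
\end{align}
The inequality is strict because the integrand $|\psi(x)|^2|\psi(y)|^2$ is positive on a set of positive measure, namely $\operatorname{supp}|\psi|^2\times\operatorname{supp}|\psi|^2$. On that set we have a strict pointwise inequality almost everywhere. This contradicts $\tr(\sigma^2)\le 1$, so no pure state lies in $\mathcal I_{\mathrm{img}}^g$.

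The main obstacle is the rigorous justification of the first step. We need that "$\Delta_g^{-1}(\rho)$ is well defined and is a state" really means $\sigma$ is a trace-class operator whose integral kernel is $\psi(x)\psi^*(y)/g(x-y)$ almost everywhere. Only then can the Hilbert--Schmidt norm be computed as that double integral. If the divided kernel is not in $L^2(\mathbb R^2)$, then $\sigma$ is not Hilbert--Schmidt, so it is not trace class and not a state. In that case the conclusion already holds, and this case split handles the domain issue. The hypothesis $g\neq 0$ a.e.\ ensures that $\rho=\Delta_g(\sigma)$ determines $\sigma$'s kernel uniquely a.e., so no other preimage could evade the argument.
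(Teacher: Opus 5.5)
Your proposal is correct and follows the paper's proof essentially verbatim. Like the paper, you pass to the preimage $\sigma=\Delta_g^{-1}(\rho)$ with kernel $\psi(x)\psi^*(y)/g(x-y)$, write its purity as the Hilbert--Schmidt integral $\iint |\psi(x)|^2|\psi(y)|^2/|g(x-y)|^2\,dx\,dy$, and use $|g|<1$ off the diagonal to get $\operatorname{Tr}(\sigma^2)>1$, which contradicts $\operatorname{Tr}(\sigma^2)\le 1$ for states. Your extra care makes rigorous what the paper states informally: strictness via the positive measure of the support, and the case where the divided kernel is not in $L^2$. One small point is missing: going from almost every $\xi$ to almost every $(x,y)$ uses that the map $(x,y)\mapsto(x-y,y)$ pulls null sets back to null sets, not only that the diagonal is null.
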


\begin{proof}
Let $\rho=\ketbra{\psi}{\psi}$ and suppose $\rho\in\mathcal I_{\mathrm{img}}^g$. Then $\widetilde\rho\coloneqq\Delta_g^{-1}(\rho)$ would be a state, with kernel
\begin{align}
  \widetilde\rho(x,y)=\frac{\psi(x)\psi(y)^*}{g(x-y)}.
\end{align}
Hence
\begin{align}
  \tr(\widetilde\rho^2) = \iint_{\mathbb{R}^2} \dd{x}\dd{y}\, \frac{|\psi(x)|^2|\psi(y)|^2}{|g(x-y)|^2}.
\end{align}
Since $|g(x-y)|<1$ away from the diagonal and every normalizable pure state has nonzero off-diagonal weight, this gives $\tr(\widetilde\rho^2)>1$, contradicting positivity and unit trace.
\end{proof}

This image-based notion is mathematically distinct from the fixed-point framework used in the main text. It is useful mainly as an alternative characterization of when a dephased state admits a physical preimage.

\section{Step-function dephasing at fixed spatial resolution}\label{app:step}

The CPTP channel $\Delta_g$ in the main text is physically motivated but generally non-idempotent. A complementary construction, useful for defining a sharp finite-resolution notion of incoherence, is obtained by replacing the smooth kernel $g$ with a hard cutoff in the relative coordinate.

For a fixed resolution parameter $\epsilon>0$, define
\begin{align}\label{app_eq:Iepsilon}
  \mathcal I_{\epsilon} \coloneqq \Bigl\{\rho\ \text{state}:\ \rho(x,y)=0\ \text{for }|x-y|>\epsilon\Bigr\}.
\end{align}
The associated step-function dephasing map is
\begin{align}\label{app_eq:DeltaTheta}
  \bigl[\Delta_\Theta^\epsilon(\rho)\bigr](x,y) = \Theta(\epsilon-|x-y|)\,\rho(x,y),
\end{align}
where $\Theta$ is the Heaviside step function.

\begin{proposition}\label{app:prop:step_basic}
The map $\Delta_\Theta^\epsilon$ is linear, trace preserving, and idempotent $\left(\Delta_\Theta^\epsilon\right)^2=\Delta_\Theta^\epsilon$. Moreover, $\rho\in\mathcal I_\epsilon \,\Leftrightarrow\, \Delta_\Theta^\epsilon(\rho)=\rho$.
\end{proposition}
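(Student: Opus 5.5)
The plan is to work entirely at the level of kernels, using that $\Delta_\Theta^\epsilon$ acts by pointwise multiplication of $\rho(x,y)$ with the mask $h_\epsilon(x,y)\coloneqq\Theta(\epsilon-|x-y|)$. The map is defined on operators whose kernel times $h_\epsilon$ is again an admissible kernel (e.g.\ Hilbert--Schmidt kernels, where multiplication by a bounded function preserves $L^2(\mathbb R^2)$). Within this domain the four claims reduce to elementary properties of $h_\epsilon$.

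\textbf{Linearity.} This follows immediately from the pointwise definition. For scalars $\alpha,\beta$ we have $h_\epsilon(\alpha\rho+\beta\sigma)=\alpha h_\epsilon\rho+\beta h_\epsilon\sigma$ as kernels.

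\textbf{Trace preservation.} I will use that for trace-class $\rho$ with continuous (or suitably regular) kernel, $\tr\rho=\int\dd{x}\,\rho(x,x)$. Since $h_\epsilon(x,x)=\Theta(\epsilon)=1$ for $\epsilon>0$, the diagonal is unchanged, so $\int\dd{x}\,[\Delta_\Theta^\epsilon(\rho)](x,x)=\int\dd{x}\,\rho(x,x)$. This is the step requiring care, because the diagonal of an $L^2$ kernel is a measure-zero set. I would handle it as in the main text: interpret the trace as the integral of the diagonal of the kernel whenever the output is trace class. Alternatively, note that the mask equals $1$ on a full neighbourhood $\{|x-y|<\epsilon\}$ of the diagonal, so any regularized diagonal evaluation (e.g.\ via an approximate identity $\int\rho(x,x+s)\varphi_n(s)\,\dd{s}$ with $\varphi_n$ supported in $(-\epsilon,\epsilon)$) agrees for $\rho$ and $\Delta_\Theta^\epsilon(\rho)$. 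I regard this as the only genuine obstacle. I would also note explicitly that trace preservation does not imply positivity, consistent with the remark in Sec.~\ref{subsec:fixedpoint} that the map is not completely positive.

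\textbf{Idempotency.} Since $\Theta$ takes only the values $0$ and $1$, we have $h_\epsilon^2=h_\epsilon$ pointwise. Hence $(\Delta_\Theta^\epsilon)^2(\rho)=h_\epsilon^2\rho=h_\epsilon\rho=\Delta_\Theta^\epsilon(\rho)$. This is the special case of the non-idempotency discussion after Eq.~\eqref{fixedidentity}, where $g$ takes values in $\{0,1\}$.

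\textbf{Fixed-point characterization.} The condition $\Delta_\Theta^\epsilon(\rho)=\rho$ is equivalent to $(1-h_\epsilon(x,y))\rho(x,y)=0$ almost everywhere, in analogy with Eq.~\eqref{eq:fixedpoint_kernel_condition}. Since $1-h_\epsilon=1$ exactly on $\{|x-y|>\epsilon\}$ (up to the null set $|x-y|=\epsilon$), this says $\rho(x,y)=0$ for almost all $|x-y|>\epsilon$, which is membership in $\mathcal I_\epsilon$ of Eq.~\eqref{app_eq:Iepsilon}. The converse direction follows by reading the same equivalence backwards.
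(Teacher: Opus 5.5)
Your proof is correct and follows the same route as the paper's: linearity from the pointwise definition, trace preservation from $\Theta(\epsilon-|x-x|)=1$ on the diagonal, idempotency from $\Theta^2=\Theta$, and the fixed-point equivalence from vanishing of the kernel on $|x-y|>\epsilon$. Your extra care makes rigorous what the paper states in one line. This includes the approximate-identity evaluation of the trace, which uses that the mask equals $1$ on a full neighbourhood of the diagonal and is valid whenever $\Delta_\Theta^\epsilon(\rho)$ is again trace class, and the remark about the null set $|x-y|=\epsilon$.
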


\begin{proof}
Linearity is immediate. Since $\Theta(\epsilon-|x-x|)=1$, the diagonal is unchanged, so the trace is preserved. Idempotency follows from $\Theta^2=\Theta$. The fixed-point condition is equivalent to vanishing of the kernel for $|x-y|>\epsilon$.
\end{proof}

Thus $\Delta_\Theta^\epsilon$ plays the role of an exact projector onto a nontrivial finite-resolution free set of normal states. The price of this sharp construction is that $\Delta_\Theta^\epsilon$ is generally not completely positive. Indeed, for translation-invariant multiplicative kernel maps, complete positivity is equivalent to positive definiteness of the mask; for abelian translation groups, Bochner's theorem identifies such masks with characteristic functions of positive measures \cite{mckee2018positive}. The step mask fails this condition, since the Fourier transform of $\Theta(\epsilon-|\xi|)$ is proportional to $\frac{\sin(\epsilon p)}{p}$, which changes sign. Therefore $\Delta_\Theta^\epsilon$ should be understood as a mathematical projector defining the free set, Eq.~\eqref{app_eq:Iepsilon}, not as a physical channel.

This makes the contrast with the main text precise: the random-kick channel $\Delta_g$ is CPTP but non-idempotent, whereas $\Delta_\Theta^\epsilon$ is idempotent but generally not CP. 

Because $\Delta_\Theta^\epsilon$ is not CP, entropy-based expressions such as $S(\rho\Vert \Delta_\Theta^\epsilon(\rho))$ are not well defined for arbitrary states and are not used in the main text. Nevertheless, the step projector yields a simple Hilbert--Schmidt diagnostic of finite-resolution coherence.

\begin{definition}
    The finite-resolution Hilbert--Schmidt coherence diagnostic is
    \begin{align}\label{app_eq:C2Theta}
      C_{2}^{\epsilon}(\rho)
      &\coloneqq \iint \limits_{|x-y|>\epsilon}\dd{x}\,\dd{y}\,|\rho(x,y)|^2.
    \end{align}
\end{definition} 
This quantity measures the Hilbert--Schmidt weight of $\rho$ lying outside the strip $|x-y|\le\epsilon$. 

\begin{example}
For the normalized Gaussian wavefunction
\begin{align}
  \psi_\sigma(x) = \frac{1}{(2\pi\sigma^2)^{1/4}} \exp\!\left(-\frac{x^2}{4\sigma^2}\right),
\end{align}
Eq.~\eqref{app_eq:C2Theta} gives
\begin{align}\label{app_eq:gaussian_step_hs}
  C_{2}^{\epsilon}(\ketbra{\psi_\sigma}{\psi_\sigma}) = 1-\erf\!\left(\frac{\epsilon}{2\sigma}\right).
\end{align}
\end{example}

These closed forms explain the appeal of the step-function construction: it yields a sharp and analytically simple notion of finite-resolution coherence. Its role in the present work, however, is auxiliary. The main text is based instead on the physically implementable CPTP dephasing channel $\Delta_g$ and the coherence quantifiers induced directly by its action.

\end{document}